\documentclass{article}
\PassOptionsToPackage{numbers, compress}{natbib}

\usepackage[preprint]{neurips_2026}
\usepackage{amsmath}
\usepackage{subcaption}
\usepackage{graphicx}
\usepackage{comment}
\usepackage{wrapfig}

\PassOptionsToPackage{numbers,compress}{natbib}
\usepackage{microtype}

\input{commands.sty}

\title{Stochastic Matching via Local Sparsification}

\author{
  Sara Ahmadian \\
  Google Research \\
  \texttt{sahmadian@google.com} \\
  \And
  Edith Cohen \\
  Google Research and Tel Aviv University\\
  \texttt{edith@cohenwang.com} \\
  \And
  Mohammad Roghani \\
  Google Research \\
  \texttt{roghani@google.com} \\
}

\begin{document}

\maketitle
\begin{abstract}
    The classic online stochastic matching problem typically requires immediate and irrevocable matching decisions. However, in many modern decentralized systems such as real-time ride-hailing and distributed cloud computing, the primary bottleneck is often local communication bandwidth rather than the timing of the match itself. We formalize this challenge by introducing a two-stage local sparsification framework. In this setting, arriving requests must prune their realized compatibility sets to a strict budget of $k$ edges before a central coordinator optimizes the global matching. This creates a "middle ground" between local information constraints and global optimization utility.

    We propose a local selection strategy, parametrized by a fractional solution of the expected instance. Theoretically, we quantify the approximation ratio as a function of the solution's {\em spread}. We prove that under sufficient spread, our sparsifier globally preserves the expected size of the maximum matching. Empirically, we demonstrate the robustness of our approach using the New York City ride-hailing datasets and adversarial synthetic benchmarks. Our results show that near-optimal global matching is achievable even with highly constrained local budgets, significantly outperforming standard online baselines. 
\end{abstract}

\section{Introduction}

Online bipartite matching is a fundamental problem in resource allocation and algorithm design. In its standard mathematical framework, the problem is modeled on a bipartite graph $G = (U \cup V, E)$, where a set of offline resources $V$ (e.g., servers, ad inventory, or ride-hailing drivers) is known in advance. Online requests $U$ arrive sequentially over time, revealing their compatible neighborhood $\Gamma(u) \subseteq V$. In the traditional online paradigm, the algorithm must make an immediate and irrevocable decision to either match $u$ to an available resource or drop the request forever.  

The seminal work of Karp, Vazirani, and Vazirani \citep{karp1990optimal} established that randomized ranking achieves a tight $1-1/e$ competitive ratio under strictly adversarial arrivals. To bypass this barrier, a rich line of literature has explored stochastic matching models. In the {\em known distribution} model, request types are drawn independently from a known prior distribution \citep{feldman2009online, manshadi2011online, jaillet2014online, bahmani2010improved}. By leveraging this statistical knowledge, algorithms can successfully break the $1-1/e$ barrier. Subsequent research has generalized these arrival processes, refined approximation bounds, and relaxed the required knowledge of the distributions \citep{bahmani2010improved, karande2011online, gamlath2019online}.

A defining characteristic of this classic literature is the strict requirement of immediate, irrevocable decision-making. However, in many modern decentralized systems, the primary bottleneck is not the irrevocability of a match, but rather {\em local communication bandwidth and memory limits}. In distributed cloud computing, querying the real-time availability of every compatible server incurs prohibitive control-plane latency. Similarly, in ride-hailing or real-time ad exchanges, platform interfaces and latency constraints restrict the system to querying or presenting only a small budget of options per user. In these settings, an algorithm must act locally to narrow down the potential edges before a central coordinator can optimize the global matching.

In this paper, we depart from the strict online matching paradigm to study stochastic matching through the lens of {\em local sparsification}. We formalize this as a two-stage stochastic process:
\begin{enumerate}
\item {\bf Local Pruning:} Each arriving request $u_i \in U$ observes its compatible resources and must immediately apply a local selection rule to choose a subset $S_i$ of exactly $k$ edges, relying only on the prior distribution without coordination with other arrivals.
\item {\bf Global Matching:} Once all $n$ selections are finalized, a central coordinator receives the resulting sparse subgraph $G_S = (U \cup V, \bigcup S_i)$ and computes a maximum bipartite matching $M(G_S)$.
\end{enumerate}
The objective is to design a selection rule that maximizes the {\em preservation ratio} $\alpha = \mathbb{E}[|M(G_S)|] / \mathbb{E}[|M(G)|]$. This framework isolates a fundamental challenge: how to design capacity-constrained local strategies that globally preserve the maximum matching size under uncertainty.

  \subsection{Technical Overview}
 Our approach parameterizes local edge-selection using offline statistical knowledge. We first relax the stochastic problem via the {\em Expected Instance Linear Program} (LP) to compute a fractional matching that maximizes the expected matching size over the known distribution of request types. In the online phase, arriving requests utilize {\em Variance-Optimal} (\varopt) dependent sampling to independently select $k$ edges, using the LP's fractional values as sampling weights. This ensures each request respects the global capacities defined by the fractional plan while making exactly $k$ local choices.

To bound the performance of this sparsifier, we introduce a {\em Heavy-Light} decomposition of the fractional solution. Our core theoretical insight is that the stochastic collision penalty, which limits traditional online algorithms to $1-1/e$ efficiency, is confined exclusively to the "heavy" edges (those with fractional values $> 1/k$). In contrast, "light" mass is preserved by the sparsifier essentially without loss. Consequently, if the fractional matching is "well-spread", we can achieve near-optimal matching with a very small budget. Specifically, for any $\epsilon > 0$, a local budget of $k = \epsilon^{-2}$ guarantees an expected matching size within a $1-\epsilon$ factor of the maximum matching in the realized graph (see \Cref{cor:eps-approx} for the formal statement and proof).
 This effectively bypasses the traditional theoretical limits of strict online matching, motivating our algorithmic objective: to construct highly spread global optimal solutions that guide local choices.
 
This well-spread assumption naturally captures the reality of dense platforms such as ride-hailing, where riders are surrounded by many nearly interchangeable vehicles rather than relying on a single critical connection (see \Cref{claim:equivalence} for the formal definition of interchangeable resources). Guided by the theory, we also propose a simple Monte Carlo heuristic for constructing high-spread fractional solutions: we repeatedly sample
realizations of the stochastic instance, compute offline optimal matchings under random shuffling, and average the resulting edge
incidences to obtain fractional selection weights. These weights are then used by the local VarOpt sparsifier in our experiments.

We validate this using NYC Taxi data, showing that our local sparsifier closely tracks the performance of a centralized, full-information offline maximum matching. While this well-spread property is common in practice, our framework remains robust even when this structural assumption is violated. In several of our adversarial synthetic benchmarks, heavy edges make up a significant portion of the fractional solution, yet the algorithm continues to perform well even without this assumption. Crucially, our results on these synthetic benchmarks demonstrate that by returning a subgraph of size $k > 1$, our algorithm fundamentally bypasses the theoretical $0.901$ efficiency barrier established for traditional online stochastic matching algorithms. For more information about the experiments, see \Cref{sec:experiments}.

\subsection{Related work}
\paragraph{Online Random-Order Arrival.} To bypass the $1-1/e$ adversarial barrier, a line of literature has studied random order arrivals. Under this condition, the RANKING algorithm achieves an approximation ratio between 0.696 and 0.727 for unweighted bipartite graphs \citep{manshadi2011online, karande2011online}. For vertex-weighted variants, RANKING achieves 0.686 \citep{huang2019online, jin2021improved, peng2025revisiting}, while edge-weighted cases reach 0.659 \citep{HuangSWZ25}. For general graphs, a series of works has steadily advanced our understanding, recently improving the approximation ratio for the RANKING algorithm under random vertex arrivals to 0.56 \citep{AronsonDFS95, chan2014ranking, 0.546ranking, poloczek2012randomized, angW020, 0.505_simplified_Ranking, derakhshan2026unified}. Weighted variants of randomized greedy matching in general graphs remain particularly challenging; algorithms have recently been shown to achieve ratios of 0.5015 for the vertex-weighted case \citep{chan2018analyzing} and 0.5014 for the edge-weighted variant \citep{angW020}

\paragraph{Streaming and Communication Complexity.} A related line of research examines maximum matching in the streaming setting under vertex arrivals \citep{kapralov2014approximating, goel2012communication, kapralov2013better}. While streaming algorithms can store $O(n \text{polylog}(n))$ edges, no single-pass algorithm under adversarial arrivals can exceed the $1 - 1/e$ barrier \citep{kapralov2013better}. For edge-arrival streams, research typically relies on constructing sparsifiers to preserve matching size \citep{kapralov2021space, bernstein2024improved, assadi2021beating}. Our work differs by requiring that the sparsification happens {\em locally} and {\em independently} at the time of vertex arrival, rather than having access to the full stream history or global coordination.

\paragraph{The Power of Spread and Regularity.} The utility of spread solutions has been implicitly recognized across several matching paradigms. In online matching, the Two Suggested Matchings (TSM) algorithm uses $k=2$ choices for routing preferences \citep{feldman2009online, karande2011online}, while $k$-regular fractional solutions allow online algorithms to achieve $1-O(1/\sqrt{k})$ efficiency \citep{bahmani2010improved}. Similar structural guarantees are used to achieve sublinear time in offline matching for $d$-regular graphs \citep{goel2010perfect, dani2025sublinear}. The importance of this structure is underscored by recent hardness results: without such guarantees, even estimating the size of a $(1-\epsilon)$-approximate maximum matching requires near-quadratic time \citep{behnezhad2024approximating, earthmoverdistance}. Our framework formalizes the intuition that when weights are sufficiently dispersed, localized sampling provides a powerful mechanism to recover near-optimal global matchings without the need for exhaustive exploration.

\section{Problem Definition and the Global Baseline}
\label{sec:problem_definition}
We formalize the challenge of matching under communication constraints as a two-stage stochastic process. Our goal is to design a local decision rule that preserves the global structural properties of a random graph while adhering to a strict information budget.

\subsection{The Stochastic Matching environment}

We operate under the \emph{known distribution} model, a standard benchmark in stochastic optimization \citep{feldman2009online, manshadi2011online}. An instance is defined by $(V, \mathcal{D}, n)$, where:
\begin{itemize}
    \item \textbf{Supply ($V$):} A fixed set of resources $V = \{v_i\}$. 
    \item \textbf{Demand Distribution ($\mathcal{D}$):} A known probability distribution over $2^V$ possible subsets of $V$ representing different demand types. 
    \item \textbf{Number of requests ($n$):} The total number of requests that will arrive.
\end{itemize}

The demand is realized as a set $U = \{u_1, \dots, u_n\}$ where each request $u_i$ is independently associated with a random subset of compatible resources $R_i \subseteq V$ sampled {\em i.i.d.} from the distribution $\mathcal{D}$. This process induces a {\em realized bipartite graph} $G = (U \cup V, E)$, where an edge $(u_i, v)$ exists if and only if $v \in R_i$.

\subsection{Local Sparsification} 

In many real-world systems, the full graph $G$ is too large to process centrally due to latency or bandwidth limits. We introduce a {\em local budget constraint} $k$, which dictates that each request $u_i$ can only "report" a subset of its compatible edges to the central coordinator.

\begin{definition}[The Selection Rule $\Phi_k$]
A sparsifier is defined by a local randomized selection function $\Phi_k: 2^V \to \Delta(2^V)$, where $\Delta(2^V)$ denotes the space of probability distributions over the subsets of $V$. For each arriving request $u_i$ with compatibility set $R_i$, 
the function draws a random subset $S_i \sim \Phi_k(R_i)$ such that $S_i \subseteq R_i$ and $|S_i| \leq k$.

Crucially, $\Phi_k$ must be \textbf{local}: the selection for request $u_i$ is made using only the distribution $\mathcal{D}$ and the realization $R_i$, without any knowledge of other requests $u_j$ ($j \neq i$).
\end{definition}

\subsection{Maximizing the Preservation Ratio}
The budget $k$ forces a form of strategic pruning. If $k=1$, the request must commit to a single resource immediately, reverting to the classic online matching paradigm where a $1-1/e$ competitive ratio is the absolute barrier. As $k$ increases, we move into a "middle ground" where we aim to maintain the global utility of the graph with minimal local information. 

Let $G_S = (U \cup V, E_S)$ be the sparsified graph with edge set $E_S = \bigcup_{i=1}^n \{(u_i, v) : v \in S_i\}$. We measure the performance of a selection strategy by the {\em preservation ratio} $\alpha$:
\begin{equation}
    \alpha = \frac{\mathbb{E}[|M(G_S)|]}{\mathbb{E}[|M(G)|]}
\end{equation}

where $M(G)$ denotes the size of the maximum matching in graph $G$. Our goal is to design a rule $\Phi_k$ that maximizes $\alpha$.

\subsection{The Expected Instance LP} 
Our framework relies on the {\em Expected Instance Linear Program} (LP), a well-established formulation that captures the average-case behavior of the stochastic instance by fractionally routing expected demand. This LP serves as a global roadmap, enforcing matching constraints while relaxing the discrete arrival process.

For an instance of $(V, \mathcal{D},n)$,
\begin{itemize}
    \item Let $T = \{t_1, \dots, t_m\}$ be the set of all possible request types.
    \item Let  $p_j$ be the single-draw probability of type $t_j$, where $\sum_{j=1}^m p_j = 1$. The expected number of arrivals for type $t_j$ is $n p_j$.
    \item Let $\Gamma(t_j) \subseteq V$ be the subset of resources compatible with type $t_j$.
\end{itemize} 

\begin{definition}[Expected Instance LP] \label{def:fractionalmatchingLP}
Let $x_{ij}$ be a fractional variable representing the conditional probability that a request of type $t_j$ is matched to resource $v_i$, therefore the expected match on the edge $(t_j, v_i)$ is $np_jx_{ij}$. The goal is to maximize the total expected matched mass:
\begin{align*}
\text{Maximize} \quad Z(x) &:= \sum_{j=1}^m n p_j \sum_{v_i \in \Gamma(t_j)} x_{ij} \\
\text{subject to:} \quad \quad
\sum_{j: v_i \in \Gamma(t_j)} n p_j x_{ij} &\le 1, \quad \forall v_i \in V \quad &&\text{(Resource Capacity)} \\
\sum_{v_i \in \Gamma(t_j)} x_{ij} &\le 1, \quad \forall t_j \in T \quad &&\text{(Type Matching Limit)} \\
x_{ij} &\ge 0, \quad \forall i, j \quad &&\text{(Non-negativity)}
\end{align*}
\end{definition}

Let $\mathrm{OPT}_{\mathrm{LP}}$ denote the optimal value of the Expected Instance LP. This value is a natural upper bound on the expected size of any matching in the realized graph. Any gap between $\mathrm{OPT}_{\mathrm{LP}}$ and $\mathbb{E}[|M(G)|]$ arises from {\em stochastic collisions}, events where realized demand assigned to a resource (with respect to $x_{ij}^*$) exceeds its capacity. Previous work establishes a tight relationship between these two quantities; for completeness, we provide a proof of the following bound in Appendix \ref{app:lp-bound-expected-matching-size}.  

\begin{prop}[Relation between $\mathrm{OPT}_{\mathrm{LP}}$ and $\E_G{[|M(G)|]}$] \label{prop:lp-bound}
The expected size of the offline maximum matching is bounded by:
\[ \left(1 - \frac{1}{e}\right) \mathrm{OPT}_{\mathrm{LP}} \leq \E_G{[|M(G)|]} \leq \mathrm{OPT}_{\mathrm{LP}} \]

\end{prop}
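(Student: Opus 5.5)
The two inequalities are proved separately: the upper bound by turning the offline optimum into a feasible LP point, the lower bound by a simple randomized rounding of an optimal LP solution.

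\textbf{Upper bound.} We build a feasible LP solution from the offline optimum. For each realization $G$ fix a maximum matching $M(G)$; by symmetry over the i.i.d.\ requests, we may choose it so that its distribution is invariant under permuting the requests. Define
\[ x_{ij} := \Pr\bigl[u_1 \text{ is matched to } v_i \text{ in } M(G) \,\big|\, R_1 = \Gamma(t_j)\bigr]. \]
Each request is matched to at most one resource, so $\sum_i x_{ij}\le 1$. By exchangeability, the expected number of requests of type $t_j$ matched to $v_i$ equals $n p_j x_{ij}$. Since $v_i$ is matched at most once in every realization, summing over $j$ gives $\sum_j n p_j x_{ij} \le 1$. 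So $x$ is feasible, and its objective value is exactly $\mathbb{E}[|M(G)|]$, which yields $\mathbb{E}[|M(G)|]\le \mathrm{OPT}_{\mathrm{LP}}$.

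\textbf{Lower bound.} Take an optimal LP solution $x^*$ and round it with a simple randomized procedure that produces a matching in $G$. Each request $u$ of realized type $t_j$ independently proposes to resource $v_i$ with probability $x^*_{ij}$, and proposes to nobody with the remaining probability. Each resource that receives at least one proposal accepts one of them. This gives a valid matching, so $\mathbb{E}[|M(G)|]$ is at least the expected number of resources receiving a proposal.

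For a fixed $v_i$, the $n$ requests propose to it independently, each with probability
\[ q_i := \sum_j p_j x^*_{ij}. \]
The capacity constraint gives $n q_i = y_i \le 1$, where $y_i := \sum_j n p_j x^*_{ij}$. Hence
\[ \Pr[v_i \text{ receives a proposal}] = 1-(1-q_i)^n \ge 1-e^{-y_i} \ge (1-1/e)\,y_i, \]
where the last step uses concavity of $1-e^{-y}$ on $[0,1]$. Summing over $i$ and using $\sum_i y_i = Z(x^*) = \mathrm{OPT}_{\mathrm{LP}}$ gives the claim.

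\textbf{Main obstacle.} The delicate step is making the upper-bound construction rigorous. It requires the symmetrization, or equivalently averaging $x_{ij}$ over all requests $u_\ell$, so that the LP variables are well defined per type and both constraints hold simultaneously. I would define $x_{ij}$ as
\[ \frac{1}{n p_j}\,\mathbb{E}\bigl[\#\{\ell : R_\ell=\Gamma(t_j),\ (u_\ell,v_i)\in M(G)\}\bigr]. \]
The resource constraint is then immediate. For the type constraint, linearity of expectation with $\mathbb{E}[\#\{\ell : R_\ell=\Gamma(t_j)\}]=np_j$ gives $\sum_i x_{ij}\le 1$. The rounding step is routine once independence across requests is noted.
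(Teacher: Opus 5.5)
Your proposal is correct and follows the paper's proof essentially step for step. The upper bound uses the normalized expected edge-incidence vector (your second, averaged definition of $x_{ij}$ is exactly the paper's $\bar{x}_{ij}$). The lower bound routes each request independently according to $x^*$ and counts the resources that receive at least one proposal, with only a cosmetic difference in the final inequality: you use $1-(1-q_i)^n \ge 1-e^{-y_i} \ge (1-1/e)\,y_i$ via concavity of $1-e^{-y}$, whereas the paper uses $1-(1-\Lambda_i/n)^n \ge (1-(1-1/n)^n)\Lambda_i \ge (1-1/e)\Lambda_i$.
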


\section{Algorithmic Framework: From Global Plans to Local Choices}
\label{sec:sparsifier_selection}
To navigate the communication bottleneck, we propose a two-stage strategy: an offline phase that computes a globally robust fractional plan via the Expected Instance LP, and an online phase that uses this plan to guide local edge selection. Our algorithm translates the global fractional plan $x^*$ into local edge selections using Variance-Optimal (\varopt) sampling.

\subsection{Sampling with Hard Constraints: \varopt}
While the LP provides the optimal fractional flow $x^*$,  the local communication budget $k$ requires a discrete selection of edges. A naive approach would be independent Bernoulli sampling, but this lacks control over the exact sample size. We instead utilize Variance-Optimal (\varopt) sampling \citep{varopt2011}, a family of unequal probability dependent sampling schemes \citep{chao1982unequal}.

Given a set of items $e\in E$ with weights $x_e\ge 0$, \varopt~ determines a uniform threshold multiplier $\tau > 0$ and assigns an inclusion probability $\pi_e$ to each item such that:
\begin{equation} \label{eq:varopt_prob}
\pi_e = \min\{1, \tau \cdot x_e\} \quad \text{where} \quad \sum_{e \in E} \pi_e = \min\{k,|E|\}.
\end{equation}
This thresholding mechanism ensures that ``heavy'' items (where $\tau \cdot x_e \geq 1$) are included deterministically ($\pi_e = 1$), while ``light'' items are sampled strictly proportional to their weights. For any subset, the probability of inclusion of an item is bounded below by its proportional share of the budget: $\pi_e \geq \min\left\{1, k \frac{x_e}{\sum_{e' \in E} x_{e'}}\right\}$.

Importantly for applications, \varopt~ sampling is very efficient and requires a single pass over the edges \citep{varopt2011} (and does not require a root-finding search for  $\tau$). 

To provide unbiased estimation of the matching size, each sampled item is assigned an inverse probability weight $w_e$. Let $I_e \in \{0, 1\}$ be the indicator that item $e$ is included in the sample $S$. The \varopt~ weight is defined as:
\[ w_e = \begin{cases} \frac{x_e}{\pi_e} & \text{if } I_e = 1 \\ 0 & \text{otherwise} \end{cases} \]
By design, $\E[w_e] = x_e$. A defining property of \varopt~ (and the reason it is superior to independent Bernoulli sampling for our purposes), is that the sum of these weights is deterministically preserved. For any realization of the sample:
\begin{equation} \label{eq:varopt_sum}
    \sum_{e \in S} w_e = \sum_{e \in S} \frac{x_e}{\pi_e} = \sum_{e \in E} x_e.
\end{equation}

This property removes the variance associated with the random size of the sample, providing tighter empirical concentration while strictly respecting the hard capacity limit $k$.

In our subsequent analysis, we leverage two fundamental structural properties of \varopt:
\begin{enumerate}
    \item \emph{Negative Association and Concentration:}  The indicator variables $I_e$ are negatively associated. Consequently, the variance of the total load is sub-additive, and standard concentration inequalities (e.g., Chernoff-Hoeffding) hold at least as tightly as they would under independent sampling \citep{Shao:2000,panconesi1997randomized, srinivasan2001distributions}.
    \item \emph{Composition of Samples:} \varopt~ sampling schemes cleanly compose. If a dataset is partitioned and we apply a \varopt~ sample over the partitions, followed by a local \varopt~ sample within the selected partitions, the resulting global distribution over the individual items preserves the exact sample size constraint and the negative association properties of a single, unified \varopt~ process.
\end{enumerate}

\subsection{Sparsifier Selection and the Role of Solution Geometry}
Given a feasible solution $x$ to the Expected Instance LP, we define our local selection strategy. This strategy effectively "rounds" the fractional roadmap provided by the LP into a sparse realized graph.

\begin{definition}[Sparsifier Selection Strategy $\Phi_{k,x}$] \label{def:selection-strategy}
When a request $u_r$ of type $t_j$ arrives, the algorithm observes its realized compatibility set $\Gamma(t_j)$. The strategy $\Phi_{k,x}$ selects a subset of edges $S_r \subseteq \{(u_r,v)\mid v\in \Gamma(t_j)\}$ by applying  $\varopt$ sampling to the fractional values $x_{ij}$ as the sampling weights.
\end{definition}

The performance of this local rule is intrinsically tied to the "shape" of the LP solution. While any optimal solution $x$ yields the same global objective $Z(x) = \mathrm{OPT}_{\mathrm{LP}}$, the local budget $k$ introduces a sensitivity to how fractional mass is distributed. Because requests cannot coordinate to avoid resource collisions, the strategy benefits from solutions that are "spread out" rather than "concentrated."

\begin{observation}[The Importance of the Fractional Geometry] \label{obs:solution-geometry}
Consider a fully connected bipartite instance with $n$ resources and $n$ uniform types ($p_j = 1/n$). 
\begin{itemize}
    \item \textbf{A Concentrated Solution:} If the LP sets $x_{ij} = 1$ for $i=j$ and $0$ otherwise, each type is tied to a specific resource. When a type $t_j$ arrives, the local sparsifier $\Phi_{k,x}$ is forced to pick this single edge regardless of the budget $k$, and the expected size of the maximum matching is exactly the number of distinct types that arrive, which is $n \Pr(\texttt{an item appearing}) = n (1-(1-1/n)^n) \approx n(1-1/e)$. 
    \item \textbf{A Spread Solution:} If the LP sets $x_{ij} = 1/n$ for all $i,j$, the fractional weights are small. The sparsifier $\Phi_{k,x}$ samples a random set of $k$ edges per arrival. As $k$ increases, the probability that a single resource is ignored or overwhelmed decreases, therefore the expected size of the matching approaches $n$ and the preservation ratio approaches $1$.
\end{itemize}
\end{observation}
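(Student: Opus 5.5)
The plan is to verify the two bullets of \Cref{obs:solution-geometry} separately. Feasibility first: the concentrated solution $x_{ij}=\mathbb{1}[i=j]$ gives resource load $\sum_j n p_j x_{ij} = n\cdot\frac1n\cdot 1 = 1$ and type load $1$. The spread solution $x_{ij}=1/n$ gives resource load $\sum_j n\cdot\frac1n\cdot\frac1n = 1$ and type load $n\cdot\frac1n=1$. Both solutions attain $Z(x)=n$, so both are optimal and $\mathrm{OPT}_{\mathrm{LP}}=n$.

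For the concentrated solution, take a request of type $t_j$. Its only positive weight is $x_{jj}=1$, so \varopt{} has exactly one item of positive weight. The zero-weight edges get $\pi_e=\min\{1,\tau\cdot 0\}=0$, so we read the sampler as restricted to the support of $x$. Hence $S_r=\{(u_r,v_j)\}$ deterministically, whatever $k$ is. The sparsified graph $G_S$ is then a disjoint union of stars centered at the resources $v_j$, where star $j$ contains the requests of type $t_j$. Its maximum matching is therefore the number of distinct types that appear among the $n$ i.i.d.\ draws. By linearity of expectation this is
\[
\sum_{j=1}^n \Pr[t_j \text{ appears}] = n\bigl(1-(1-1/n)^n\bigr)\to n(1-1/e).
\]
Since $G$ is complete bipartite with $n$ requests and $n$ resources, $\E[|M(G)|]=n$. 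So the preservation ratio is about $1-1/e$ for every $k$.

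For the spread solution, all weights are equal, so \varopt{} picks a uniformly random $k$-subset of $V$ for each request, independently across requests. It therefore suffices to show that $\E[|M(G_S)|]/n\to 1$ as $k$ grows. I would lower-bound the matching size by the number of resources covered, minus a deficiency term, using either Hall's theorem or a greedy/random-order argument. The simplest route is a deficiency bound, organized as follows.

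\begin{enumerate}
\item Each resource $v$ is missed by a single request with probability $1-k/n$, so it is isolated with probability $(1-k/n)^n\le e^{-k}$.
\item A random bipartite graph in which every left vertex has $k$ uniform neighbors has a matching of size at least $n(1-O(e^{-ck}))$; alternatively, run the greedy process sketched next.
\item Match requests in order, each to an unmatched sampled neighbor. Request $r$ fails only if all $k$ of its choices are already matched. When $m$ resources are matched, this happens with probability at most $(m/n)^k$.
\item The fraction $f$ of unmatched resources then follows the fluid ODE $df/dt=-(1-(1-f)^k)$. Solving it shows the final unmatched fraction is $O(\log k/k)$, which tends to $0$.
\end{enumerate}

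This gives ratio $\to 1$, as claimed. Since the observation only asserts that the ratio approaches $1$ as $k$ grows, the $1-O(\log k/k)$ fluid estimate (or simply $k=n$ yielding $G_S=G$) suffices.

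The main obstacle is making step 4 rigorous rather than heuristic. I would handle it with a coupling or supermartingale argument on the greedy process. Alternatively, I would appeal to the general heavy-light bound later in the paper: here every weight is $1/n\le 1/k$, so all mass is light.
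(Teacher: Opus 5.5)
Your treatment of the concentrated solution matches the paper's: the sampler is forced onto the single support edge, $G_S$ is a union of stars, and the matching size is the number of distinct types, giving $n(1-(1-1/n)^n)$ against $|M(G)|=n$. For the spread solution, the paper gives only the heuristic sentence inside the observation. Its rigorous counterpart is your fallback, Theorem~\ref{thm:approx-ratio} with $Z_H=0$. That proof scales the inverse-probability weights $1/k$ down at overloaded resources and bounds the overflow by a variance argument, giving $\mathbb{E}[|M(G_S)|]\ge n\bigl(1-\tfrac{1}{2\sqrt{k}}\bigr)-1$.

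Your primary greedy route is genuinely different and, on this instance, stronger. The step you flag as the main obstacle also needs no fluid limit or supermartingale. Before request $r$, at most $r-1$ resources are matched, and its $k$-subset is independent of the history. So it fails with probability at most $\binom{r-1}{k}/\binom{n}{k}\le((r-1)/n)^k$. Summing, the expected number of failures is at most $\sum_{j=0}^{n-1}(j/n)^k\le n/(k+1)$, hence $\mathbb{E}[|M(G_S)|]\ge nk/(k+1)$.

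This is a $1-O(1/k)$ guarantee with no additive loss. By contrast, the particular fractional matching in the paper's proof necessarily loses order $n/\sqrt{k}$ here (for $k\ll n$): each resource's load is roughly $\frac1k\mathrm{Poisson}(k)$, whose expected overflow above $1$ is of order $1/\sqrt{k}$.

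What the paper's route buys is generality. It applies to any feasible $x$ on any instance, with the loss controlled by $Z_H/Z$. Your argument instead relies on the complete-graph symmetry: samples are uniform over all of $V$, and every unmatched sampled resource is usable. Finally, your remark that $k=n$ gives $G_S=G$ is true, but it misses the intended regime of fixed $k$ and large $n$.
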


This observation motivates the need to carefully select the feasible solution $x$ so that ``light'' fractional values that spread demand and mitigate stochastic collisions are preferred, leading to the following decomposition. 

\subsection{Quantifying Sparsifier Quality: Heavy-Light Decomposition}
To bridge the gap between the local budget $k$ and global performance, we partition the edges in the support of $x$ based on their ``value'' relative to the budget's inverse, $1/k$.
\begin{definition}[Heavy-Light Decomposition] \label{def:light-heavy}
Given a sparsification parameter $k$ and a feasible fractional solution $x$, we partition the edges into: 
\begin{itemize}
    \item \emph{Light Edges ($\mathcal{E}_L$):} $\{(i, j) \mid x_{ij} \leq 1/k\}$. These represent demand that can be spread across the budget.
    \item \emph{Heavy Edges ($\mathcal{E}_H$):} $\{(i, j) \mid x_{ij} > 1/k\}$. These represent concentrated demand that the budget $k$ cannot further dilute.
\end{itemize}
This partition induces a natural decomposition of the expected fractional objective into $Z(x) = Z_L(x) + Z_H(x)$, where:
\[ Z_L(x) = \sum_{(i,j) \in \mathcal{E}_L} n p_j x_{ij} \quad \text{and} \quad Z_H(x) = \sum_{(i,j) \in \mathcal{E}_H} n p_j x_{ij} \]
\end{definition}

We now establish that our sparsifier preserves the light fractional components, confining the stochastic collision penalty to the heavy components. Here, we present our main result and defer the proof to Appendix \Cref{app:theorem-proof}.

\begin{theorem}[Sparsifier Approximation Guarantee] \label{thm:approx-ratio}
Let $G_S$ be the sparsifier constructed by the local $\Phi_{k,x}$ selection strategy from a feasible solution $x$, using a budget parameter $k$. The expected size of the maximum matching on $G_S$ satisfies:
\[
\small
\E[|M(G_S)|] \geq Z\left(1-  \min\left\{
\frac12\sqrt{\frac{Z_H}{Z}+\frac{1}{k}\frac{Z_L}{Z}},
\;
e^{-\frac{Z_H}{Z}}
\left(
\frac{Z_H}{Z}+\frac12\sqrt{\frac{1}{k}\frac{Z_L}{Z}}
\right)
\right\}  \right)-1
\]
\end{theorem}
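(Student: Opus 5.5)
We bound the maximum matching of $G_S$ from below by exhibiting a fractional matching on $G_S$ and rounding it. Since $G_S$ is bipartite, a fractional matching of value $W$ loses nothing when passed to an integral matching; the additive $-1$ absorbs rounding or boundary effects, e.g. a resource whose load is not fractionally integral. The plan is to work resource by resource.

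For each resource $v_i$, let $L_i=\sum_r w_{(u_r,v_i)}$ be the total \varopt{} weight routed to $v_i$ over all arrivals $r$. Here $w$ is the inverse-probability weight from the sampling of request $u_r$, using the weights $x_{ij}$ of its type. Two facts hold for every arrival:
\begin{itemize}
\item by \eqref{eq:varopt_sum}, each request's weights sum to $\sum_{v_i\in\Gamma(t_j)}x_{ij}\le 1$;
\item $\E[w]=x$, hence $\E[L_i]=\sum_j np_jx_{ij}\le 1$.
\end{itemize}
Set $y_{r,i}=w_{(u_r,v_i)}/\max\{1,L_i\}$. This is feasible on $G_S$: it is supported on sampled edges, row sums are at most $1$, and column sums are at most $1$. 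Its value is at least $\sum_i \min\{1,L_i\}$, so
\[
\E[|M(G_S)|]\ge \sum_i \E[\min\{1,L_i\}] - 1 = Z-\sum_i \E[(L_i-1)^+]-1 .
\]
It therefore suffices to show
\[
\sum_i\E[(L_i-1)^+]\le Z\cdot\min\{\cdots\}.
\]

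Split each load as $L_i=H_i+\Lambda_i$, where $H_i$ collects heavy edges and $\Lambda_i$ collects light ones. For a light edge with $x\le 1/k$ we have $\pi=\min\{1,\tau x\}$ with $\tau\ge k$ (because $\sum x\le 1$). Hence its weight $w=x/\pi$ is at most $1/k$ whenever sampled, and usually equals $x/(\tau x)=1/\tau\le 1/k$. For a heavy edge the weight is at most $1$.

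Across requests the weights are independent. Within a request the indicators are negatively associated, and by the composition property the per-resource contributions behave no worse than independent ones. So $\mathrm{Var}(\Lambda_i)\le \sum \E[w^2]\le \frac1k\E[\Lambda_i]$. A heavy edge contributes a $\{0,1\}$-ish variable (weight $x/\pi$ with $\pi=1$ when $\tau x\ge1$, so weight $=x$ deterministically when selected by arrival), hence $\mathrm{Var}(H_i)\le \E[H_i]$.

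For the first branch of the minimum, use $\E[(L-1)^+]\le \E[(L-\E L)^+]\le \frac12\E|L-\E L|\le\frac12\sqrt{\mathrm{Var}(L_i)}$. Write $h_i=\E H_i$ and $\ell_i=\E\Lambda_i$. Summing over $i$ and applying Cauchy--Schwarz with weights $\E L_i$, whose total is $Z$, gives
\[
\sum_i\tfrac12\sqrt{h_i+\ell_i/k}\le \tfrac12 Z\sqrt{Z_H/Z+Z_L/(kZ)} .
\]
This Cauchy--Schwarz step is the delicate part: I would apply it as $\sum_i \sqrt{a_i}\le\sqrt{(\sum_i c_i)(\sum_i a_i/c_i)}$ with $c_i=\E L_i$. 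It may require normalising per resource, using $\E L_i\le1$ so that $\mathrm{Var}\le \E$ scaled appropriately.

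For the second branch, condition on the heavy contribution, which is essentially a sum of independent Bernoulli arrivals. The overflow splits into two pieces:
\begin{itemize}
\item the Poisson-type collision loss from heavy edges, bounded by $e^{-Z_H/Z}\cdot Z_H$ through the standard $1-1/e$-style computation $\E[(X-1)^+]$ for Bernoulli sums with mean $h$;
\item the light fluctuation, which matters only on the event that heavy demand has not already saturated the resource, an event of probability roughly $e^{-h_i}$. There it contributes $\frac12\sqrt{\ell_i/k}$.
\end{itemize}
Aggregating with the same concavity/Cauchy--Schwarz argument and then Jensen on $e^{-h}$ (convex) should yield the factor $e^{-Z_H/Z}$.

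The main obstacle is this second branch. It requires the bound $\E[(H+\Lambda-1)^+]\le e^{-h}(h+\tfrac12\sqrt{\ell/k})$, which mixes heavy collisions with light fluctuation, and it is not obvious that Jensen's direction works when aggregating over resources. I would first try to prove the per-resource inequality for $H$ Poisson and $\Lambda$ with $\mathrm{Var}\le\ell/k$, using $\E L\le1$, and then handle the aggregation.
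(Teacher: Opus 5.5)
\textbf{There is a genuine gap.} The two steps you flag as delicate are exactly where the argument breaks, and for a structural reason. Your per-resource bounds are not linear in the resource's expected load $c_i=\mathbb{E}L_i$, so summing them over resources does not give something proportional to $Z$.

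For the variance branch, take $N\gg Z$ resources, each carrying only light load $c_i=Z/N$. Then $\sum_i\frac12\sqrt{\ell_i/k}=\frac12\sqrt{NZ/k}$, far above the target $\frac12 Z/\sqrt{k}$. Your weighted Cauchy--Schwarz gives $\sqrt{Z\sum_i a_i/c_i}$, and since $c_i\le 1$ we have $\sum_i a_i/c_i\ge\sum_i a_i$, which is the wrong direction. In effect, $\frac12\sqrt{\mathrm{Var}(L_i)}$ exceeds the homogeneous quantity $\frac12 c_i\sqrt{\eta_i+(1-\eta_i)/k}$, with $\eta_i=h_i/c_i$, by a factor $c_i^{-1/2}$.

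The second branch fails the same way. With $Z_L=0$ and many resources of small heavy load, $\sum_i e^{-h_i}h_i\approx Z_H>e^{-1}Z_H$. Also, Jensen applied to the convex $e^{-h}$ points the wrong way.

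The per-resource mechanism for that branch is missing as well. Heavy weights are $x_{ij}\in(1/k,1]$, so one heavy arrival need not saturate a resource. When heavy load does saturate it, the excess contains the entire light load there, not just its fluctuation. The paper obtains the $e^{-\eta}$ factor in three steps:
\begin{enumerate}
\item It rounds heavy sizes up to $1$ and light sizes up to $1/k$, which can only increase the excess by convex-order monotonicity (Lemma~\ref{lem:monotonicity}).
\item For a bin of mean exactly $1$, the excess equals the deficit $\mathbb{E}[(1-Y)_+]$, and the deficit vanishes as soon as one size-$1$ item arrives (Lemma~\ref{lem:singlebin2s}).
\item $B_2(x)=e^{-x}\bigl(x+\frac12\sqrt{(1-x)/k}\bigr)$ is concave in the heavy \emph{fraction} $x$ (Claim~\ref{claim:concavity}), which is what makes Jensen go the right way.
\end{enumerate}

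The missing idea is therefore a reduction to full bins. The paper gets it from the endpoint lemma: the excess is convex in how one size's rate is split between two bins. Pairwise moves therefore either fill a bin or give the residual bins disjoint size supports, so with two sizes at most two residual bins remain. These are merged when possible and topped up with independent load in proportion $\eta=Z_H/Z$. This yields $m\le Z+1$ full bins with average heavy fraction exactly $\eta$, to which the single-bin bounds and Jensen apply (Lemma~\ref{lem:balancedbound}).

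That is the actual source of the additive $-1$. It has nothing to do with rounding, since a fractional matching in a bipartite graph rounds to an integral one of at least the same size.

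A lighter repair inside your framework would use $Y_s\sim\mathrm{CP}(s\rho_i)$. Show that $s\mapsto\mathbb{E}[(Y_s-1)_+]$ is convex and vanishes at $s=0$. Then a bin of load $c_i\le 1$ has excess at most $c_i\,U_\tau(\eta_i)$, with $\tau=1/k$. Concavity of $U_\tau$ with weights $c_i/Z$ then gives the total bound $Z\,U_\tau(Z_H/Z)$. Some such normalization has to be supplied; the raw bounds in your proposal cannot provide it.
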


\begin{corollary}[Near-Optimal Sparsification] \label{cor:eps-approx}
If the fractional solution $x$ is sufficiently spread such that the heavy component constitutes at most an $\epsilon/2$ fraction of the total objective (i.e., $Z_H(x)/Z(x) \leq \epsilon/2$), then a sparsifier budget of $k = \epsilon^{-2}$ guarantees that \[\frac{\E[|M(G_S)|]}{Z(x)} \geq 1 - \epsilon - \frac{1}{Z(x)}\ .\] In particular, if $Z(x)\geq \E[|M(G)|]$, the preservation ratio is
$\frac{\E[|M(G_S)|]}{\E[|M(G)|]} \geq 1 - \epsilon - \frac{1}{\E[|M(G)|]}$.
\end{corollary}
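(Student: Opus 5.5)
Corollary \ref{cor:eps-approx} follows directly from Theorem \ref{thm:approx-ratio}; only the first branch of the $\min$ is needed. Write $h = Z_H/Z \le \epsilon/2$ and $\ell = Z_L/Z = 1-h \le 1$. The theorem gives
\[
\E[|M(G_S)|] \ge Z\left(1 - \tfrac12\sqrt{h + \ell/k}\right) - 1 ,
\]
since the $\min$ is bounded above by its first argument. So it suffices to show $\tfrac12\sqrt{h+\ell/k} \le \epsilon$.

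Take $k=\epsilon^{-2}$, so $\ell/k \le \epsilon^2$. This gives $h + \ell/k \le \epsilon/2 + \epsilon^2$. For $\epsilon \le 1$ we have $\epsilon^2 \le \epsilon$, hence $h+\ell/k \le \tfrac32\epsilon$, and
\[
\tfrac12\sqrt{h+\ell/k} \le \tfrac12\sqrt{\tfrac32\epsilon}.
\]
The right-hand side is at most $\epsilon$ exactly when $\tfrac38\epsilon \le \epsilon^2$, i.e.\ when $\epsilon \ge 3/8$. That is the wrong direction for small $\epsilon$, so the first branch with $\epsilon/2$ heavy mass only gives an $O(\sqrt\epsilon)$ loss.

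This is the main obstacle. The second branch of the $\min$ does better:
\[
e^{-h}\Bigl(h + \tfrac12\sqrt{\ell/k}\Bigr) \le h + \tfrac12\sqrt{\ell/k} \le \tfrac{\epsilon}{2} + \tfrac12\sqrt{\epsilon^2} = \epsilon ,
\]
using $e^{-h}\le 1$ and $\ell \le 1$. Hence the $\min$ is at most $\epsilon$, and
\[
\E[|M(G_S)|] \ge Z(1-\epsilon) - 1 .
\]
Dividing by $Z(x)$ gives the first claim; for $\epsilon>1$ the claim is trivial because its right-hand side is negative.

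For the preservation ratio, assume $Z(x) \ge \E[|M(G)|]$. Note that $Z(x) \le \mathrm{OPT}_{\mathrm{LP}}$ by feasibility, so this is a genuine hypothesis rather than something automatic; it holds for example when $x$ is LP-optimal, by Proposition \ref{prop:lp-bound}. Then
\[
\E[|M(G_S)|] \ge (1-\epsilon)Z(x) - 1 \ge (1-\epsilon)\,\E[|M(G)|] - 1 ,
\]
where the second inequality needs $1-\epsilon \ge 0$ (again, otherwise the bound is trivial). Dividing by $\E[|M(G)|]$ yields the stated ratio $1-\epsilon-1/\E[|M(G)|]$.
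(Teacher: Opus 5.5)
Your proof is correct and is the same substitution into Theorem~\ref{thm:approx-ratio} that the paper uses: the second branch of the $\min$ gives $h+\tfrac12\sqrt{\ell/k}\le \epsilon/2+\epsilon/2=\epsilon$. Drop your opening remark that only the first branch is needed, since your own computation shows that branch yields only an $O(\sqrt{\epsilon})$ loss.
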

\begin{proof}
Substituting $Z_H(x)/Z(x) \leq \epsilon/2$ and $1/k = \epsilon^2$ into the bound from Theorem \ref{thm:approx-ratio}.
\end{proof}

In practice, it is often the case that large sets of resources are highly symmetric or interchangeable. The following claim demonstrates that we can exploit these symmetries to naturally spread the fractional demand, shifting weight from $Z_H$ to $Z_L$.

\begin{claim}[Spread Solutions via Equivalence Classes] \label{claim:equivalence}
We say two resources $v, v'$ are interchangeable if for all types $t_j \in T$, they share the exact same compatibility: $v \in \Gamma(t_j) \iff v' \in \Gamma(t_j)$. If an instance contains an equivalence class of interchangeable resources of size $\ell$, there exists an optimal fractional solution $x^*$ where all edges incident to this class are $\ell$-light (i.e., $x^*_{ij} \leq 1/\ell$).
\end{claim}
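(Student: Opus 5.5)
Take any optimal solution $x$ of the Expected Instance LP and average it over the class. Let $C=\{v_{i_1},\dots,v_{i_\ell}\}$ be the class of interchangeable resources. Because every type $t_j$ is compatible either with all of $C$ or with none of it, the symmetric group on $C$ acts on the feasible region. A permutation $\sigma$ of $C$ (fixing all other resources) sends $x$ to $x^\sigma$, with $x^\sigma_{ij}=x_{\sigma(i)j}$ for $v_i\in C$. This map preserves every constraint and the objective $Z$. By convexity of the feasible polytope and linearity of $Z$, the average $\bar x=\frac{1}{\ell!}\sum_\sigma x^\sigma$ is again optimal.

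Next describe $\bar x$ explicitly. For each type $t_j$ with $C\subseteq\Gamma(t_j)$ and each $v_i\in C$, we get $\bar x_{ij}=\frac1\ell\sum_{v_{i'}\in C}x_{i'j}$. Coordinates outside $C$ are unchanged. Checking feasibility directly is straightforward:
\begin{itemize}
\item The type constraint for $t_j$ keeps the same left-hand side, since the total mass on $C$ is redistributed, not changed.
\item For $v_i\in C$, the capacity load is $\sum_j np_j\bar x_{ij}=\frac1\ell\sum_{v_{i'}\in C}\sum_j np_j x_{i'j}\le 1$, an average of $\ell$ loads that are each at most $1$. This uses that all $v_{i'}\in C$ share the same set of compatible types.
\item Capacities of resources outside $C$ are untouched.
\end{itemize}
The objective is unchanged, so $\bar x$ is optimal.

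Finally, the type matching limit gives $\sum_{v_{i'}\in C}x_{i'j}\le\sum_{v_{i'}\in\Gamma(t_j)}x_{i'j}\le1$. Hence $\bar x_{ij}\le 1/\ell$ for every edge incident to $C$, which is the claimed $\ell$-lightness. If there are several disjoint classes, the averaging can be applied to each class in turn. Averaging over one class does not disturb the already-equalized values on another, because the operations act on disjoint coordinates. The only step needing care is the capacity bound, and it rests on the shared-compatibility property as noted above.
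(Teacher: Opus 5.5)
Your proposal is correct and follows the paper's proof: both replace an optimal $x$ by its uniform average over the class $C$, use interchangeability to show that feasibility and the objective are preserved, and then apply the type matching limit $\sum_{v\in C}x_{v,j}\le 1$ to get $x^*_{ij}\le 1/\ell$. Your symmetrization framing and your explicit check that each capacity in $C$ becomes an average of $\ell$ loads, each at most $1$, only spell out the step the paper states in one line.
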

\begin{proof}
Let $x$ be any optimal solution. For any type $t_j$ and the equivalence class $C$ of size $\ell$, we construct $x^*$ by uniformly averaging the fractional weights across the class: $x^*_{ij} = \frac{1}{\ell} \sum_{v \in C} x_{v, j}$ for all $v_i \in C$. Because the resources are strictly interchangeable, this uniform redistribution preserves both the capacity constraints and the global objective value $Z(x)$. Furthermore, since the type matching limit enforces $\sum_{v \in C} x_{v, j} \leq 1$, the averaged weight on any single edge incident to the class is bounded by $x^*_{ij} \leq 1/\ell$.
\end{proof}

Motivated by Theorem \ref{thm:approx-ratio}, our algorithmic objective is to construct solutions $x$ with a favorable spread, and in particular, aim to maximize the light component $Z_L(x)$.

\begin{figure}[t!]
    \centering

    \begin{subfigure}[b]{0.98\columnwidth}
        \centering
        \includegraphics[width=0.8\linewidth]{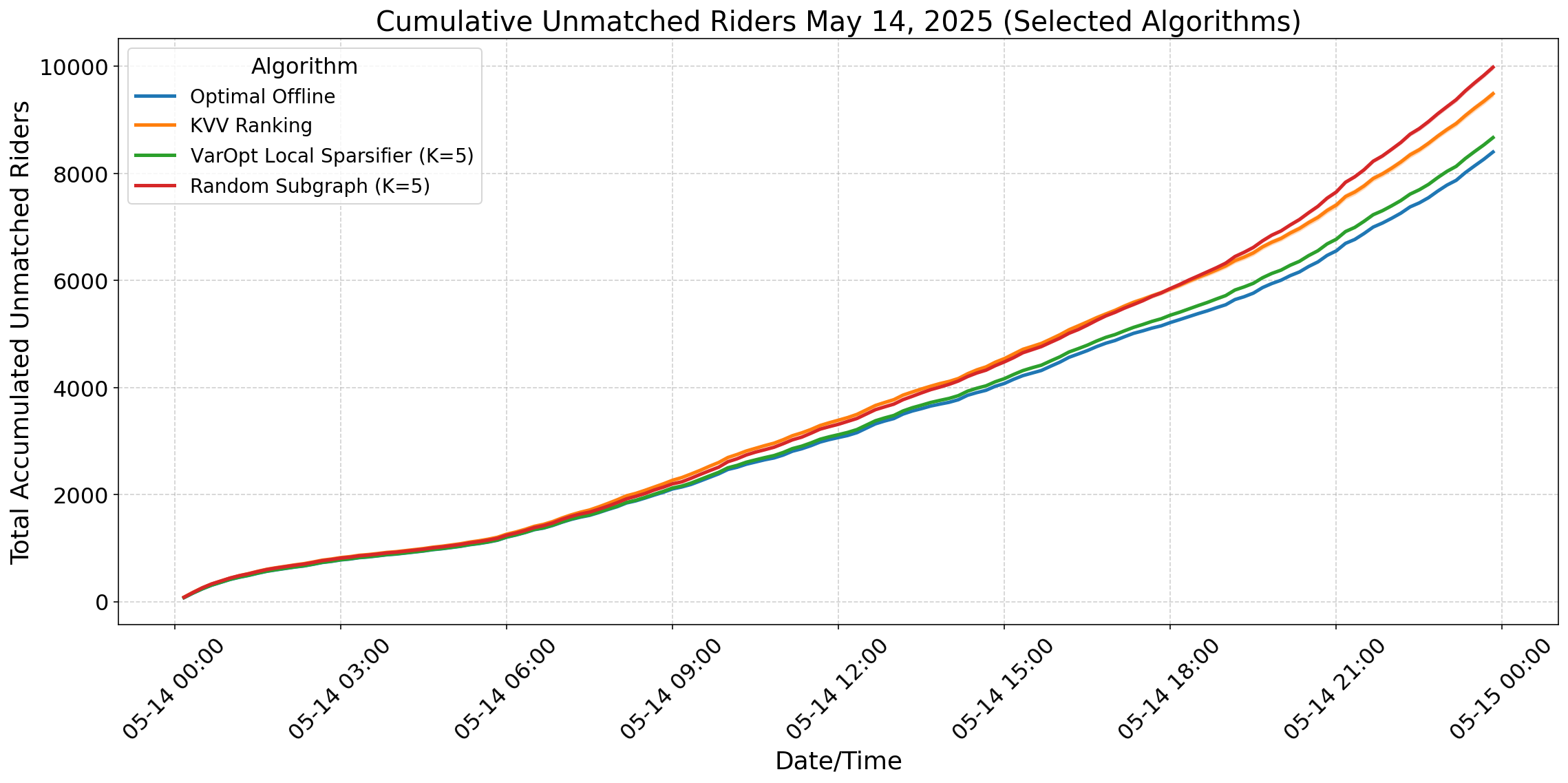}
        \caption{Baseline comparison of cumulative unmet demand. The \varopt~local sparsifier ($k=5$) outperforms uniform random selection and KVV baseline and tracks closely with the full-information optimal offline.}
        \label{fig:main_comp}
    \end{subfigure}

    \vspace{0.25cm}

    \begin{subfigure}[b]{0.98\columnwidth}
        \centering
        \includegraphics[width=0.8\linewidth]{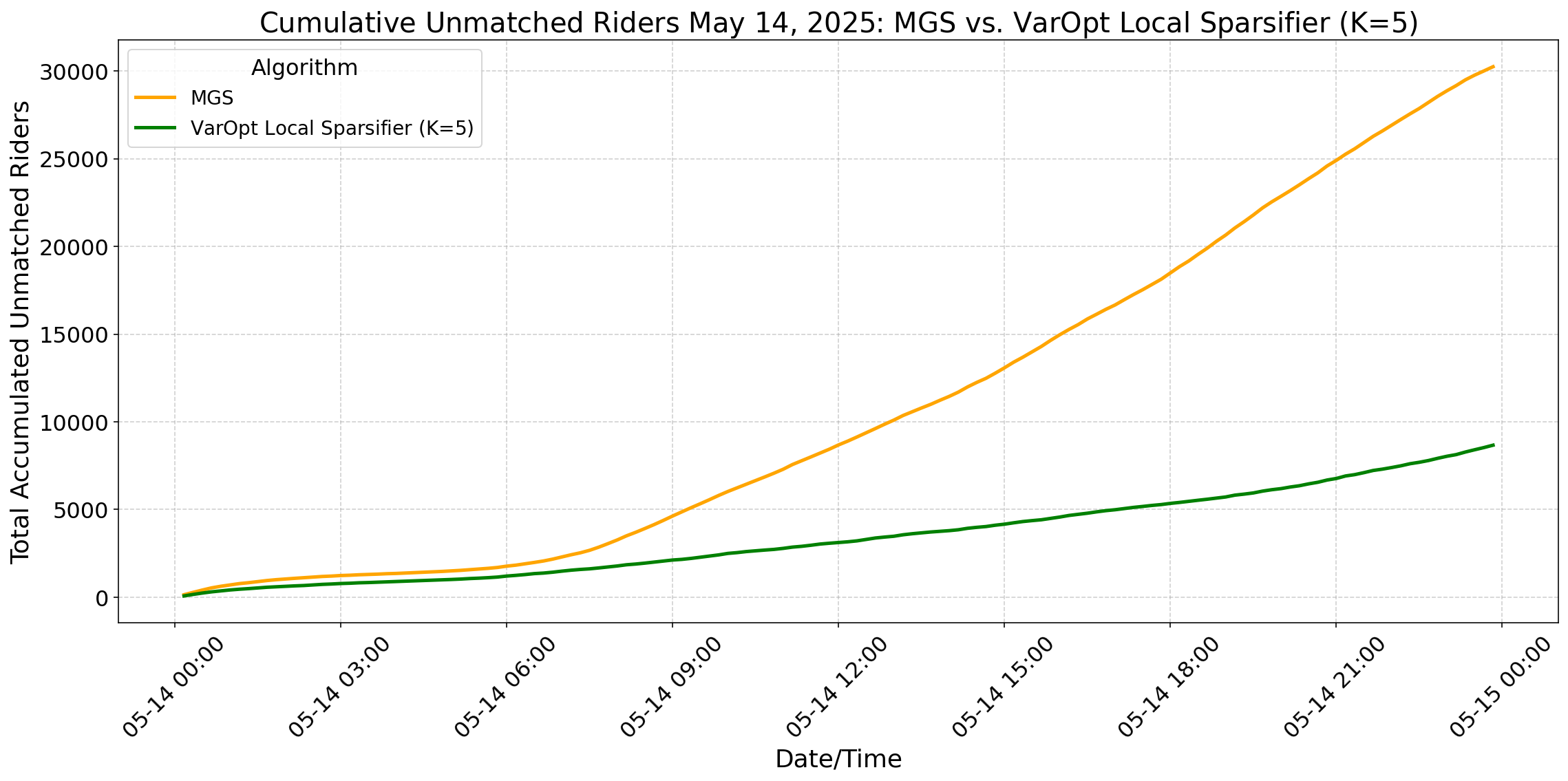}
        \caption{Performance against the standard MGS algorithm.}
        \label{fig:mgs_comp}
    \end{subfigure}

    \vspace{0.25cm}

    \begin{subfigure}[b]{0.98\columnwidth}
        \centering
        \includegraphics[width=0.8\linewidth]{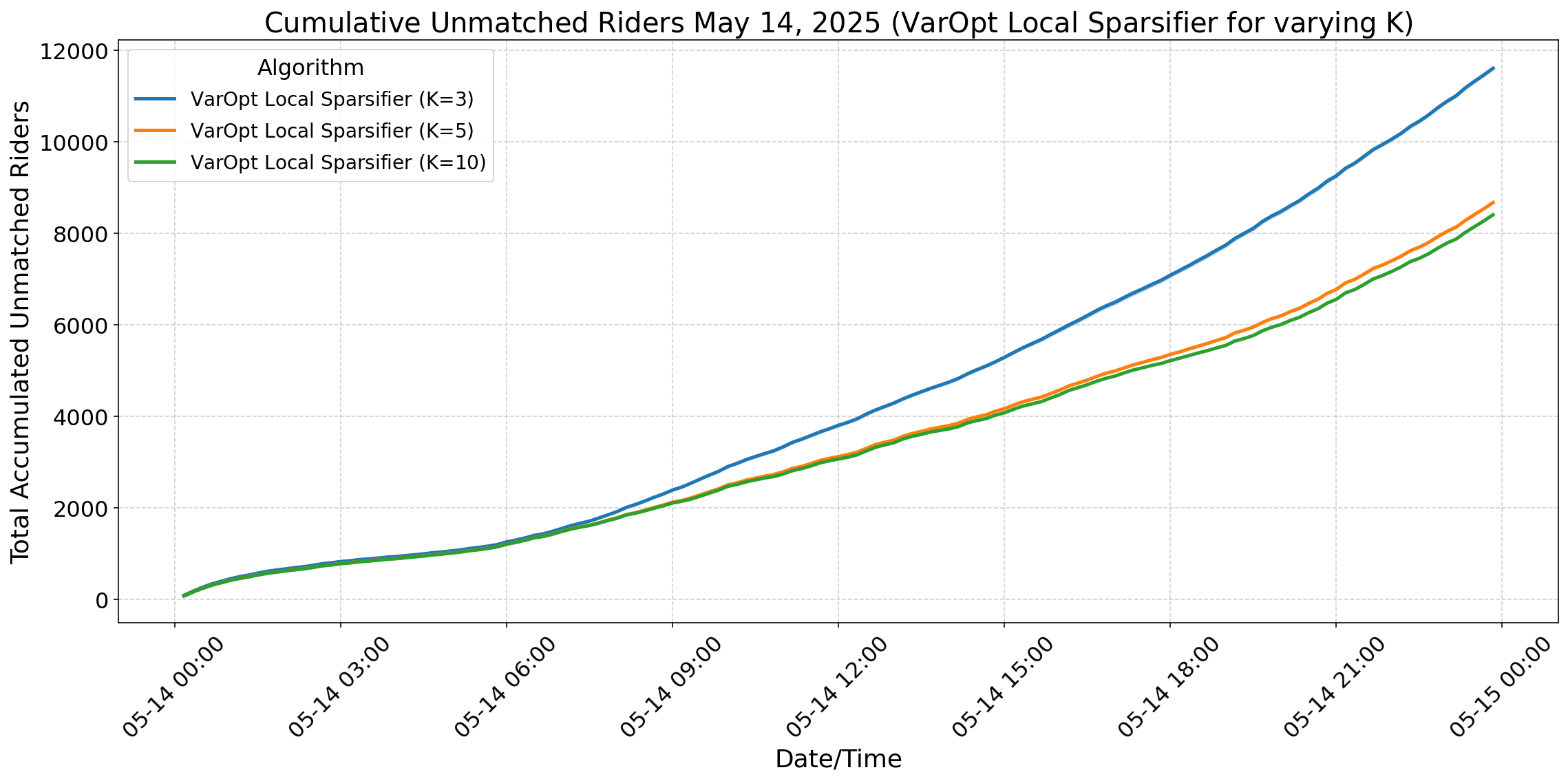}
        \caption{Ablation study on the local sparsifier size ($k$) for different values of $k \in \{3,5,10\}$. Increasing the budget from $k=3$ to $k=10$ yields marginal gains, confirming that a small subset of highly probable edges is sufficient.}
        \label{fig:k_ablation}
    \end{subfigure}

    \caption{Empirical evaluation of stochastic matching algorithms on NYC taxi data (May 14, 2025). The plots are reported with $\pm$95\% Confidence Intervals. Because the primary metric aggregates cumulative unmet demand over time, the variance remains exceedingly small relative to the total volume. Consequently, these confidence bands are tight and visibly indistinguishable from the mean lines.}
    \label{fig:nyc_results}
\end{figure}

\section{Experiments}\label{sec:experiments}

In this section, we evaluate the performance of our stochastic matching sparsifiers across two distinct environments: (1) a large-scale ride-hailing simulation using real-world taxi data, and (2) a suite of synthetic adversarial benchmarks designed to stress-test online matching algorithms.  Our objective is to demonstrate that by restricting the available edges for each rider to a small set $k$, we can recover a significant fraction of the offline optimal matching. We compare our approach against standard online benchmarks and heuristic sparsification strategies to evaluate how guided edge-selection impacts market efficiency. 

\subsection{Real-World Application: NYC Ride-Hailing}

\textbf{Dataset and Graph Construction.} We utilize the NYC Yellow Taxi Trip Data from May 14, 2025\footnote{https://www.nyc.gov/site/tlc/about/tlc-trip-record-data.page}. The spatial framework is discretized into taxi zones, with an adjacency matrix $A$ where $A_{uv} = 1$ if zones $u$ and $v$ share a boundary.  

The simulation proceeds in 10-minute intervals. For each interval $t$, we construct a bipartite graph $G_t = (U_t \cup V_t, E_t)$ where:

\begin{itemize}

    \item $V_t$ (Offline/Supply): Vehicles sampled from drop-off events in $[t-5\text{m}, t]$.

    \item $U_t$ (Online/Demand): Riders sampled from pick-up events in $[t, t+5\text{m}]$.

    \item $E_t$: An edge exists if the car's drop-off zone and the rider's pick-up zone are identical or adjacent.
\end{itemize}

The construction of graph $G_t$ is detailed in \Cref{sec:graph_construction}.

\textbf{Evaluated Algorithms.} We compare five strategies to observe the impact of limited information and restricted choice:
\begin{enumerate}
    \item \textbf{Optimal Offline:} The maximum matching $M(G_t)$ computed via Hopcroft-Karp with full knowledge of the interval. This is the absolute upper bound.
    \item \textbf{KVV Ranking \cite{karp1990optimal}:} The classic online algorithm assigning random ranks to offline cars; arriving riders greedily pick the highest-ranked available neighbor.
    \item \textbf{Random Subgraph:} A naive sparsifier where riders only consider a random subset of at most $k$ valid edges.
    \item \textbf{MGS Algorithm \cite{manshadi2011online}:} Following Algorithm 1 in \cite{manshadi2011online}, this approach uses offline statistics to guide online matching decisions. Specifically, it computes the marginal probability that each edge appears in an optimal offline matching and then it samples two matchings as guidance for the online algorithm.
\item \textbf{\varopt\ Local Sparsifier:} Our proposed method.  We estimate the conditional match probabilities $x_{ij} = \mathbb{P}((u_j, v_i) \in M(G_t))$ by averaging the results of 100 Monte Carlo simulations (details provided in \Cref{sec:MonteCarlo}). Arriving riders then apply \varopt~sampling (see  \Cref{alg:online_sparsifier}) to select $k$ edges based on these weights.  
\end{enumerate}

\textbf{Results.} Our primary metric is the volume of unmatched riders (unmet demand). As shown in Figure \ref{fig:nyc_results}, the \textit{\varopt~ Local Sparsifier} outperforms the other baseline algorithms in this metric. Furthermore, when $k=10$, its performance closely approaches the offline optimal.

Figure \ref{fig:nyc_results} presents a comprehensive breakdown of the algorithms' performance across the 24-hour simulation period. In Figure \ref{fig:main_comp}, we observe that restricting choices via the \textit{\varopt~Local Sparsifier} ($k=5$) yields significantly fewer unmatched riders than the naive \textit{Random Subgraph} ($k=5$) and KVV algorithm. Figure \ref{fig:mgs_comp} highlights the inefficiency of the MGS baseline in this high-variance setting, where our proposed sparsifier demonstrates vastly superior matching rates. To evaluate the sensitivity of our approach to the choice $k$, Figure \ref{fig:k_ablation} compares varying sizes of the local sparsifier. The gap between $k=5$ and $k=10$ is minimal, reinforcing our theoretical claim that a small $k$ is sufficient to capture the majority of the optimal matching weight.

\subsection{Synthetic Tests}
We evaluate the algorithms against structured hard examples from the literature of online matching. These adversarial benchmarks are designed to highlight the failure modes of standard algorithms and demonstrate the robustness of probabilistic edge selection.

\textbf{The Partitioned Block Graph (Karande et al. \cite{karande2011online}):} Following Section 4.2 of \cite{karande2011online}, we construct a bipartite graph over $n$ vertices partitioned into three row blocks ($R_1, R_2, R_3$) and three column blocks ($C_1, C_2, C_3$). The column and row blocks contain $0.3n$, $0.4n$, and $0.3n$ vertices, respectively. Edges exist on the main diagonal ($i=j$), between $R_1$ and $C_2$, and between $R_2$ and $C_3$. Arriving columns are random, while rows are randomly permuted. This specific block-diagonal structure restricts the KVV Ranking algorithm to a competitive ratio of approximately 0.727 when both sides of the bipartite graph are randomly permuted; however, we emphasize that this tight bound applies specifically to the random order arrival model and does not extend to the i.i.d. arrival setting considered in our work.

\textbf{The TSM Tightness Graph (Feldman et al. \cite{feldman2009online}):} We implement the construction from Section 4.2.5 of \cite{feldman2009online}, originally designed as a tight adversarial instance for the Two-Suggested-Matching (TSM) algorithm. For a graph of size $n$ (a multiple of 4), the structure consists of $n/4$ disjoint 6-cycles ($\{u_i - x_i - v_i - y_i - w_i - z_i - u_i\}$) interspersed with two dense complete bipartite subgraphs (between $K$ and $X$, and $L$ and $W$). The optimal matching achieves an assignment size of $n(1 - 1/2e)$. This structure acts as a trap for offline-trained algorithms, challenging our sparsifier to hold onto the optimal matchings it found during the Monte Carlo phase.

\textbf{The KVV Upper Triangular Graph:} We test the classic adversarial construction for the Ranking algorithm. In this $n \times n$ graph, arriving node $i$ is connected to offline nodes $\{i, i+1, \dots, n\}$. Our \varopt~local sparsifier leverages its offline Monte Carlo weights to heavily prioritize the exact diagonal.

\textbf{The Online Upper-Bound Graph (Bahmani and Kapralov \cite{bahmani2010improved}):} We implement the graph construction from Theorem 7 in \cite{bahmani2010improved}, which establishes a strict theoretical upper bound of $\approx 0.901$ for any traditional online stochastic matching algorithm. The bipartite graph features offline nodes $A_1 \cup A_2$ and online types $I_1 \cup I_2$, sized such that $|I_1| = |A_2| = n$ and $|A_1| = |I_2| = n/e$. The edge set consists of a hidden perfect matching between $I_1$ and $A_2$, obscured by complete bipartite subgraphs between $I_2 \cup A_2$ and $I_1 \cup A_1$. Because traditional online algorithms must irreversibly commit to a single edge upon arrival, they mathematically cannot exceed the 0.901 efficiency limit on this instance. We use this test to explicitly demonstrate that returning a subgraph of $k > 1$ edges fundamentally bypasses the theoretical constraints of strict online commitment.

\begin{table*}[htbp]
    \centering
    \caption{Average matching efficiency (approximation ratio) across synthetic adversarial benchmarks. Each graph consists of 100 vertices per side. Results are averaged over 100 independent trials, with 100 Monte Carlo simulations used for offline learning. Values are reported as Mean $\pm$ 95\% Confidence Interval.}
    \label{tab:synthetic_results}
    \resizebox{\textwidth}{!}{
    \begin{tabular}{l c c c c c c c c}
        \toprule
        & & & \multicolumn{3}{c}{Random Subgraph} & \multicolumn{3}{c}{\varopt~Local Sparsifier} \\
        \cmidrule(lr){4-6} \cmidrule(lr){7-9}
        Adversarial Graph & KVV & MGS & K=3 & K=5 & K=10 & K=3 & K=5 & K=10 \\
        \midrule
        Partitioned Block \cite{karande2011online} & $82.82\% \pm 0.54\%$ & $70.07\% \pm 1.07\%$ & $74.16\% \pm 0.65\%$ & $77.48\% \pm 0.73\%$ & $83.47\% \pm 0.55\%$ & $93.61\% \pm 0.50\%$ & $98.44\% \pm 0.27\%$ & $99.97\% \pm 0.04\%$ \\
        KVV Upper Triangular & $91.28\% \pm 0.45\%$ & $68.53\% \pm 0.89\%$ & $76.27\% \pm 0.53\%$ & $84.53\% \pm 0.56\%$ & $93.33\% \pm 0.42\%$ & $95.21\% \pm 0.40\%$ & $99.11\% \pm 0.16\%$ & $99.98\% \pm 0.03\%$ \\
        Bahmani 0.901 Bound \cite{bahmani2010improved} & $83.40\% \pm 0.53\%$ & $68.78\% \pm 0.72\%$ & $61.77\% \pm 0.77\%$ & $66.49\% \pm 0.87\%$ & $76.80\% \pm 0.84\%$ & $89.69\% \pm 0.44\%$ & $95.11\% \pm 0.40\%$ & $99.26\% \pm 0.17\%$ \\
        TSM Tight \cite{feldman2009online} & $94.58\% \pm 0.42\%$ & $74.93\% \pm 0.95\%$ & $98.13\% \pm 0.44\%$ & $99.61\% \pm 0.20\%$ & $99.93\% \pm 0.08\%$ & $97.54\% \pm 0.50\%$ & $99.58\% \pm 0.19\%$ & $99.96\% \pm 0.05\%$ \\
        \bottomrule
    \end{tabular}
    }
\end{table*}

\paragraph{Computational Resources} All experiments were conducted on a standard cloud-based computing platform. The runtime environment for each simulation was configured with 2 vCPUs and 32 GB of system RAM.

\section{Conclusion and Open Questions}

In this work, we introduced a two-stage local sparsification framework for stochastic bipartite matching, motivated by the strict communication and latency bottlenecks inherent in modern decentralized platforms. We proposed a capacity-constrained selection strategy that leverages dependent sampling, parameterized by the fractional solutions of the expected instance. Theoretically, we formalized a Heavy-Light decomposition to prove that structurally spread fractional demand allows a local budget of $k$ choices to successfully bypass the classic stochastic collision penalty. Empirically, evaluations on both real-world ride-hailing data and adversarial synthetic benchmarks demonstrate that our local sparsifier vastly outperforms random selection baselines and strict online algorithms. 

Our framework opens several compelling directions for future research, primarily concerning the fundamental information-theoretic gap between the maximum matching on the true stochastic realization $\E[|M(G)|]$ and the matching on the sparsifier $\E[|M(G_S)|]$. While our analysis establishes a strong $1-\epsilon$ approximation when the underlying fractional solution is predominantly light, the gap's behavior on graphs with unavoidable heavy edges remains an open challenge. The most intriguing open question is whether there exists a local sparsification strategy, perhaps by applying our framework with a carefully constructed solution, where the ratio $\E[|M(G_S)|] / \E[|M(G)|]$ asymptotically approaches $1$ for larger $k$ across \emph{all} graph topologies.

\newpage
\bibliographystyle{plainnat}
\bibliography{main}

\appendix

\section{Proof of \Cref{prop:lp-bound}}\label{app:lp-bound-expected-matching-size}

The proof is divided into two parts: showing the upper bound via the LP relaxation, and proving the lower bound via an independent randomized assignment. Let $x^*$ be an optimal fractional solution to the Expected Instance LP, such that $\sum_{i,j} n p_j x_{ij}^* = \textrm{OPT}_{\text{LP}}$.

\paragraph{Upper Bound} Consider any realized graph $G$ and let $M(G)$ be the maximum bipartite matching on this graph. We can construct a corresponding expected fractional matching $\bar{x}$. Let $\bar{x}_{ij}$ be the expected number of times a request of type $t_j$ is matched to resource $v_i$ in $M(G)$, normalized by the expected number of arrivals $n p_j$. Because $M(G)$ is a valid matching for every realization, it must respect the matching constraint: (1) $\sum_j n p_j \bar{x}_{ij} \le 1$ , and (2) $\sum_i \bar{x}_{ij} \le 1$. Since $\bar{x}_{ij}$ satisfies all the constraints of the Expected Instance LP, it is a feasible solution. Because $x^*$ is the optimal solution, the objective value produced by $\bar{x}$ cannot exceed $\textrm{OPT}_{\text{LP}}$:
\begin{align*}
    \mathbb{E}[|M(G)|] = \sum_{j=1}^{m} n p_j \sum_{v_i \in \Gamma(t_j)} \bar{x}_{ij} \le \textrm{OPT}_{\text{LP}}
\end{align*}

\paragraph{Lower Bound} To prove the lower bound, we consider the following randomized routing strategy. Because the maximum offline matching $M(G)$ is optimal, its size is bounded below by the number of matches produced by any valid allocation strategy. Suppose that upon arrival, each of the $n$ requests independently draws its type from distribution $D$. If the request is of type $t_j$, it independently chooses to route a match to a compatible resource $v_i$ with probability $x_{ij}^*$. (Because $\sum_i x_{ij}^* \le 1$, this is a valid probability distribution). Let $\Lambda_i$ be the expected number of routing attempts directed at resource $v_i$. From the LP constraints, we know:
$$\Lambda_i = \sum_{j} n p_j x_{ij}^* \le 1$$

Since there are exactly $n$ independent arrivals, the probability that resource $v_i$ receives exactly zero routing attempts is:
$$\mathbb{P}(v_i \text{ is unmatched}) = \left( 1 - \frac{\sum_j n p_j x_{ij}^*}{n} \right)^n = \left( 1 - \frac{\Lambda_i}{n} \right)^n$$

A resource is successfully matched if it receives at least one routing attempt (any collisions are simply dropped). The probability that resource $v_i$ is matched is therefore $1 - (1 - \Lambda_i/n)^n$. By the standard properties of this function for $\Lambda_i \in [0, 1]$, we have:$$1 - \left( 1 - \frac{\Lambda_i}{n} \right)^n \ge \left(1 - \left(1 - \frac{1}{n}\right)^n \right) \Lambda_i \ge \left(1 - \frac{1}{e}\right) \Lambda_i$$

By linearity of expectation, we sum the probability of being matched across all resources $v_i \in V$ to bound the expected size of this matching, which in turn lower bounds the maximum matching:
  $$\mathbb{E}[|M(G)|] \ge \sum_{i \in V} \left(1 - \frac{1}{e}\right) \Lambda_i = \left(1 - \frac{1}{e}\right) \sum_{i \in V} \sum_{j=1}^m n p_j x_{ij}^* = \left(1 - \frac{1}{e}\right) \textrm{OPT}_{\text{LP}}$$
  
This concludes the proof. 

\section{Proof of \Cref{thm:approx-ratio}} \label{app:theorem-proof}

Our strategy is to specify a feasible fractional matching $f$ on the realized sparsifier $G_S$ and bound the ratio of its expected size $\E[\|f\|_1]$ to the LP objective $Z(x)$. Since any fractional matching on a bipartite graph can be rounded to an integral matching of the same size without loss, it holds that $\E[|M(G_S)|] \geq \E[\|f\|_1]$.

\begin{remark}[Micro-types Limit] \label{rem:microtypes}
To simplify the analysis, 
we assume without loss of generality that the expected arrival rates of individual types are infinitesimally small. We can obtain an equivalent model by conceptually splitting any original type (with single-draw probability $p_j$) into $m$ identical ``micro-types'', each with probability $p_j/m$, and taking the limit as $m \to \infty$.
In this limit, the probability of multiple arrivals of the exact same micro-type vanishes. Consequently, the process of drawing exactly $n$ independent arrivals is mathematically equivalent to selecting a fixed-size sample of $n$ distinct micro-types without replacement, where a specific micro-type of original type $j$ is included with probability $n (p_j/m)$.
\end{remark}

With the micro-types limit, the global arrival of $n$ requests acts as a size-$n$ \varopt\ sample over micro-types. The entire stochastic process therefore composes two negatively-associated sampling schemes: the global arrival of requests and the local selection of $k$ edges per request. Thus, the final subset of edges in $G_S$ and at each resource behaves as a unified, negatively-associated \varopt\ sample.

\subsection{Fractional Matching on the Sparsifier}

For each request $u$ of type $t_j$ that arrives, we assign each locally sampled edge $e = (u, v_i)$ its corresponding inverse probability weight $w_e$:
\begin{itemize}
    \item \emph{Light Edges ($e \in \mathcal{E}_L$):} When $x_{ij} \leq 1/k$, $\pi_{ij} \geq k x_{ij}$, which ensures $w_e = x_{ij} / \pi_{ij} \leq 1/k$.
    \item \emph{Heavy Edges ($e \in \mathcal{E}_H$):} When $x_{ij} > 1/k$, the \varopt~ thresholding guarantees $\pi_{ij} = 1$ and therefore $w_e = x_{ij}$. The edge is deterministically included whenever the type arrives.
\end{itemize}

By the deterministic weight-preservation property of \varopt (\cref{eq:varopt_sum}), the sum of weights leaving any arriving request exactly matches the fractional capacity of the type: $\sum_{i: (u, v_i) \in E_S} w_e = \sum_{i} x_{ij} \leq 1$. Therefore, the capacity constraints are naturally satisfied at the request nodes.

However, collisions may cause the total arriving weight at a resource $v_i$, denoted $Y_i = \sum_{e \in E_S \cap \delta(v_i)} w_e$, to exceed $1$. To obtain a strictly feasible fractional matching $f$, we uniformly scale down the edges incident to any overloaded resource:
\[ f_e := \frac{w_e}{\max(1, Y_i)} \quad \text{for edge } e = (u, v_i) \]
It is easy to verify that $f$ is a valid fractional matching on $G_S$. Since $\E[\sum_i Y_i] = \E[\|w\|_1] = Z(x)$, the expected size of the matching is the total expected demand minus the expected \emph{excess demand} that is scaled away:
\begin{equation} \label{eq:excess-gap}
\E[\|f\|_1] = Z(x) - \sum_{i \in V} \E[\max(0, Y_i - 1)]\ .
\end{equation}

\subsection{Global Excess of Compound Poisson Loads} \label{sec:CompundPoissonModel}

We proceed to bound the expected excess at resources. We assume independent arrivals at each resource, since negative associations only decrease the excess \citep{Shao:2000}. In the micro-type limit, the load is a compound Poisson distribution. 
We formulate the maximum total expected excess as an optimization problem. We then identify load reassignments that can only increase this expected excess to arrive at a certain normalized form. We express upper bounds on the expected excess at a resource dependent on its load distribution. We then optimize over this upper bound to derive an upper bound on the total expected excess.

\paragraph{Global demand.}
Let $\rho$ be an intensity measure on $(0,1]$. For any measurable set
$S\subseteq(0,1]$, the quantity
\[
\rho(S)
\]
is the expected number of arriving items whose sizes lie in $S$.

The total expected load is
\[
Z := \int_0^1 w\,\rho(dw) < \infty .
\]

\paragraph{Resources and feasible partitions.}
Let $V$ be an index set of bins. A feasible allocation is a collection of
intensity measures
\[
\{\rho_i\}_{i\in V}
\]
on $(0,1]$ such that:

\begin{enumerate}
    \item \textbf{Mass conservation:}
    \[
    \sum_{i\in V}\rho_i = \rho .
    \]

    \item \textbf{Capacity constraints:}
    \[
    \Lambda_i := \int_0^1 w\,\rho_i(dw) \le 1
    \qquad \forall i\in V.
    \]
\end{enumerate}

\paragraph{Local load model.}
For each bin $i\in V$, let
\[
Y_i\sim \mathrm{CP}(\rho_i)
\]
be the compound Poisson load with intensity measure $\rho_i$.

Equivalently, if
\[
\lambda_i:=\rho_i((0,1])
\]
is finite, then
\[
Y_i=\sum_{k=1}^{N_i} W_{i,k},
\]
where
\[
N_i\sim\mathrm{Poisson}(\lambda_i),
\]
and, conditioned on $N_i$, the sizes $W_{i,k}$ are i.i.d. with distribution
\[
\Pr[W_{i,k}\in S]=\frac{\rho_i(S)}{\lambda_i}.
\]
The mean load satisfies
\[
\mathbb E[Y_i]=\Lambda_i.
\]

\paragraph{Expected excess functional.}
Define the expected excess of bin $i$ by
\[
\mathcal E[\rho_i]
:=
\mathbb E[(Y_i-1)_+],
\qquad
(x)_+:=\max\{x,0\}.
\]

\paragraph{Optimization problem.}
The quantity of interest is
\[
\mathrm{OPT}(\rho)
:=
\sup_{\{\rho_i\}_{i\in V}}
\sum_{i\in V}\mathcal E[\rho_i],
\]
where the supremum is over all feasible partitions satisfying mass
conservation and the capacity constraints above.

\paragraph{Finite-support convention.}
In the remainder of this section, unless stated otherwise, we assume that
the global intensity measure is supported on a finite set of sizes
\[
\mathcal W\subset(0,1].
\]
We write $\rho_i(w)$ for the Poisson rate of size-$w$ items assigned to bin
$i$, so that
\[
\rho_i(dw)=\sum_{w\in\mathcal W}\rho_i(w)\delta_w(dw).
\]
Thus integrals are identified with the corresponding finite sums; for
example,
\[
\Lambda_i
=
\int_0^1 x\,\rho_i(dx)
=
\sum_{w\in\mathcal W} w\,\rho_i(w).
\]

\subsection{Pairwise size-wise optimization process}

We next describe a local improvement procedure for a feasible partition in
the finite-support setting. The procedure operates on two residual bins and
one size whose intensity is currently assigned to both bins. Keeping all
other sizes fixed, we optimize the allocation of this one size between the
two bins. The endpoint property below shows that this one-dimensional
optimization admits an optimal endpoint solution: either one of the two bins
becomes full, or all intensity of this size assigned to the pair is moved to
a single bin. Thus the update can be performed without decreasing the total
expected excess.

Iterating such pairwise size-wise endpoint updates yields a useful normal
form. After removing bins that become full, the remaining residual bins can
be chosen to have disjoint size supports: no size with positive residual
intensity is split across two distinct residual bins.

\begin{lemma}[Endpoint property with arbitrary background loads]
Let $Y_1$ and $Y_2$ be independent nonnegative random variables with
$\mathbb E[Y_1]+\mathbb E[Y_2]<\infty$. Let $w>0$ and $R\ge 0$.
For $a\in[0,R]$, define
\[
G(a)
:=
\mathbb E\bigl[(Y_1+wN_a-1)_+\bigr]
+
\mathbb E\bigl[(Y_2+wN_{R-a}-1)_+\bigr],
\]
where $N_a\sim \mathrm{Poisson}(a)$ and
$N_{R-a}\sim \mathrm{Poisson}(R-a)$ are independent of each other and of
$Y_1,Y_2$. Then $G$ is convex on $[0,R]$. Consequently, on every compact
interval $I\subseteq[0,R]$,
\[
\max_{a\in I}G(a)
=
\max\{G(\ell),G(u)\},
\qquad I=[\ell,u].
\]
\end{lemma}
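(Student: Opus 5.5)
Since $G$ is a sum of two terms, each depending on $a$ through a single Poisson parameter, the plan is to prove that for an arbitrary independent nonnegative background $Y$ with $\mathbb E[Y]<\infty$, the map
\[
h(\lambda):=\mathbb E\bigl[(Y+wN_\lambda-1)_+\bigr],\qquad N_\lambda\sim\mathrm{Poisson}(\lambda),
\]
is convex on $[0,\infty)$. Then $G(a)=h_1(a)+h_2(R-a)$ is a sum of a convex function and a convex function composed with an affine map, hence convex on $[0,R]$. The endpoint statement is the standard fact that a convex function on a compact interval attains its maximum at an endpoint: for $a=\theta\ell+(1-\theta)u$ we have $G(a)\le\theta G(\ell)+(1-\theta)G(u)\le\max\{G(\ell),G(u)\}$.

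To prove convexity of $h$, first condition on $Y$. By independence, $h(\lambda)=\mathbb E_Y[\phi_Y(\lambda)]$, where
\[
\phi_y(\lambda)=\sum_{n\ge0}e^{-\lambda}\frac{\lambda^n}{n!}\,g_y(n),\qquad g_y(n):=(y+wn-1)_+ .
\]
Because $n\mapsto y+wn-1$ is affine and $t\mapsto t_+$ is convex and nondecreasing, $g_y$ is a convex sequence in $n$. The key identity is the Poisson derivative formula $\frac{d}{d\lambda}\mathbb E[f(N_\lambda)]=\mathbb E[f(N_\lambda+1)-f(N_\lambda)]$. Applying it twice gives
\[
\phi_y''(\lambda)=\mathbb E\bigl[g_y(N_\lambda+2)-2g_y(N_\lambda+1)+g_y(N_\lambda)\bigr]\ge 0,
\]
so each $\phi_y$ is convex. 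An average of convex functions is convex, so $h$ is convex.

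The main obstacle is justifying termwise differentiation of the series and the interchange with $\mathbb E_Y$. I would handle this by noting that $0\le g_y(n)\le y+wn$, which grows at most linearly. The power series therefore converges with all derivatives locally uniformly on compact $\lambda$-sets, and $\phi_y(\lambda)\le y+w\lambda$ is integrable since $\mathbb E[Y]<\infty$. To sidestep differentiating under the expectation entirely, it suffices to check the convexity inequality $\phi_y(\theta\lambda_1+(1-\theta)\lambda_2)\le\theta\phi_y(\lambda_1)+(1-\theta)\phi_y(\lambda_2)$ pointwise in $y$ and then take expectations; this inequality is preserved because all quantities are finite. Finally, all quantities involved are bounded by $\mathbb E[Y_1]+\mathbb E[Y_2]+wR<\infty$, so $G$ is finite and well-defined on $[0,R]$.
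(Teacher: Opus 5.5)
Your proposal is correct and follows essentially the same route as the paper: you condition on the background load, apply the Poisson identity $\frac{d}{d\lambda}\mathbb E[f(N_\lambda)]=\mathbb E[f(N_\lambda+1)-f(N_\lambda)]$ twice to get a nonnegative second difference of the convex sequence $n\mapsto (y+wn-1)_+$, average over $Y$, handle the second term via the affine map $a\mapsto R-a$, and conclude with the endpoint property of convex functions on a compact interval. Your remarks on linear growth, and on averaging the pointwise convexity inequality instead of differentiating under $\mathbb E_Y$, add justification that the paper leaves implicit.
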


\begin{proof}
Let
\[
\psi(x):=(x-1)_+.
\]
We first show that
\[
f_1(a):=\mathbb E[\psi(Y_1+wN_a)]
\]
is convex in $a$.

Condition on $Y_1=y$. Define
\[
\phi_y(n):=\psi(y+wn)=(y+wn-1)_+.
\]
Since $\psi$ is convex and $n\mapsto y+wn$ is affine,
$\phi_y$ is convex as a function of the real variable $n$.

Now define
\[
g_y(a):=\mathbb E[\phi_y(N_a)].
\]
For any function
$\phi$ of at most linear growth, the standard Poisson differentiation
identity gives
\[
\frac{d}{da}\mathbb E[\phi(N_a)]
=
\mathbb E[\phi(N_a+1)-\phi(N_a)].
\]
Applying this identity twice gives
\[
g_y''(a)
=
\mathbb E\bigl[
\phi_y(N_a+2)-2\phi_y(N_a+1)+\phi_y(N_a)
\bigr].
\]
Because $\phi_y$ is convex,
\[
\phi_y(n+2)-2\phi_y(n+1)+\phi_y(n)\ge 0
\qquad \forall n\ge 0.
\]
Hence
\[
g_y''(a)\ge 0,
\]
so $g_y$ is convex in $a$ for every fixed $y$.

Therefore
\[
f_1(a)=\mathbb E_{Y_1}[g_{Y_1}(a)]
\]
is convex, since a nonnegative weighted average of convex functions is
convex.

The same argument shows that
\[
f_2(u):=\mathbb E[\psi(Y_2+wN_u)]
\]
is convex in $u$. Since $u=R-a$ is an affine function of $a$, the function
\[
a\mapsto f_2(R-a)
\]
is convex. Thus
\[
G(a)=f_1(a)+f_2(R-a)
\]
is convex on $[0,R]$.

Finally, a convex function on a compact interval attains its maximum at
one of the two endpoints. Therefore, for any compact feasible interval
$I=[\ell,u]\subseteq[0,R]$,
\[
\max_{a\in I}G(a)=\max\{G(\ell),G(u)\}.
\]
\end{proof}

\begin{definition}[Pairwise size-wise endpoint process]
Let $\{\rho_i\}_{i\in V}$ be a feasible partition of the global intensity
measure $\rho$, and define
\[
\Lambda_i := \int_0^1 x\,\rho_i(dx) \le 1.
\]
A bin $i$ is called \emph{full} if $\Lambda_i=1$ and \emph{residual} if
$\Lambda_i<1$.

The process repeatedly performs the following operation.

\begin{enumerate}
    \item Choose two residual bins $i,j$ and a size $w\in\mathcal W$ such that
    \[
    \rho_i(w)>0
    \qquad\text{and}\qquad
    \rho_j(w)>0.
    \]
    Let
    \[
    R := \rho_i(w)+\rho_j(w)
    \]
    denote the total size-$w$ rate assigned to this pair.

    \item Freeze all rates in bins $i,j$ at sizes $w'\neq w$, and redistribute
    only the size-$w$ rate $R$: assign rate $a$ to bin $i$ and rate $R-a$ to
    bin $j$.

    \item Choose $a$ from the feasible interval so as to maximize the two-bin
    contribution
    \[
    \mathbb E\bigl[(Y_i^{\setminus w}+wN_a-1)_+\bigr]
    +
    \mathbb E\bigl[(Y_j^{\setminus w}+wN_{R-a}-1)_+\bigr],
    \]
    where $Y_i^{\setminus w}$ and $Y_j^{\setminus w}$ are the background loads
    generated by all sizes other than $w$. Among maximizers, choose an
    endpoint maximizer; by the endpoint property, such a maximizer exists.
    Thus the update can be chosen so that either one of the two bins becomes
    full, or all size-$w$ rate in the pair is assigned to a single bin.

    \item If a bin becomes full, remove it from subsequent operations.
\end{enumerate}

The process terminates when no two residual bins carry positive rate at a
common size $w\in\mathcal W$.
\end{definition}

\begin{corollary}[Endpoint form of a pairwise update]
Consider a pairwise update on residual bins $i,j$ and size $w$, and let
\[
R:=\rho_i(w)+\rho_j(w).
\]
Let
\[
\Lambda_i^{\setminus w}:=\Lambda_i-w\rho_i(w),
\qquad
\Lambda_j^{\setminus w}:=\Lambda_j-w\rho_j(w).
\]
The feasible interval for the rate $a$ assigned to bin $i$ is
\[
I_{i,j,w}
=
\left\{
a\in[0,R]:
\Lambda_i^{\setminus w}+wa\le 1,\;
\Lambda_j^{\setminus w}+w(R-a)\le 1
\right\}.
\]
There is an optimal choice of $a\in I_{i,j,w}$ at an endpoint of this
interval. Consequently, after such an endpoint update, at least one of the
following holds:
\[
\rho_i(w)=0,\qquad
\rho_j(w)=0,\qquad
\Lambda_i=1,\qquad
\Lambda_j=1.
\]
\end{corollary}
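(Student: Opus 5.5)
The plan is to reduce the corollary to the endpoint property lemma, applied with the background loads $Y_1 = Y_i^{\setminus w}$ and $Y_2 = Y_j^{\setminus w}$. These are the compound Poisson loads of bins $i$ and $j$ restricted to sizes other than $w$. They are independent of each other, since the bins carry independent compound Poisson loads. They are also independent of the size-$w$ components, because distinct sizes in a compound Poisson process are independent Poisson streams. Each is nonnegative with finite mean $\Lambda^{\setminus w}\le 1$. The two-bin objective as a function of $a$ is then exactly $G(a)$ from the lemma, with the same $w$ and $R$. The lemma therefore gives convexity of $G$ on $[0,R]$.

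Next I will identify the feasible set $I_{i,j,w}$ as a compact interval. The constraint $\Lambda_i^{\setminus w}+wa\le 1$ is equivalent to $a\le (1-\Lambda_i^{\setminus w})/w$. The constraint $\Lambda_j^{\setminus w}+w(R-a)\le1$ is equivalent to $a\ge R-(1-\Lambda_j^{\setminus w})/w$. Intersecting these with $[0,R]$ gives
\[
I_{i,j,w}=[\ell,u],\qquad
\ell=\max\{0,\,R-(1-\Lambda_j^{\setminus w})/w\},\qquad
u=\min\{R,\,(1-\Lambda_i^{\setminus w})/w\}.
\]
This interval is nonempty because the current allocation $a=\rho_i(w)$ is feasible. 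The lemma's conclusion then says the maximum over $I$ is attained at $\ell$ or at $u$, so an optimal endpoint choice exists.

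Finally comes the case analysis, which is routine bookkeeping rather than a real obstacle. If $a=u=R$, then $\rho_j(w)=R-a=0$. If $a=u=(1-\Lambda_i^{\setminus w})/w$, then the new load is $\Lambda_i=\Lambda_i^{\setminus w}+wa=1$. Symmetrically, if $a=\ell=0$, then $\rho_i(w)=0$. If $a=\ell=R-(1-\Lambda_j^{\setminus w})/w$, then $\Lambda_j=1$. In every case one of the four listed conditions holds.

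The only step needing care is the independence structure used to invoke the lemma. Splitting a compound Poisson load by size yields independent components, by the Poisson thinning/superposition property. The problem setup takes different bins to be independent, following the remark that negative association only lowers the excess.
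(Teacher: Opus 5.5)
Your proposal is correct and follows essentially the same route as the paper: apply the endpoint-property lemma with the background loads $Y_i^{\setminus w}, Y_j^{\setminus w}$ to get convexity of $G$, then take the maximum over the compact feasible interval at an endpoint, and read off the four tight constraints as the four listed conditions. Your explicit formula for $[\ell,u]$, the nonemptiness check via the current allocation $a=\rho_i(w)$, and the independence justification are details the paper leaves implicit.
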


\begin{proof}
Let $Y_i^{\setminus w}$ and $Y_j^{\setminus w}$ denote the loads in the two
bins generated by all sizes other than $w$. If rate $a$ of size $w$ is
assigned to bin $i$, and rate $R-a$ to bin $j$, the two-bin objective is
\[
G(a)
=
\mathbb E\bigl[(Y_i^{\setminus w}+wN_a-1)_+\bigr]
+
\mathbb E\bigl[(Y_j^{\setminus w}+wN_{R-a}-1)_+\bigr].
\]
By the endpoint property, $G$ is convex on $[0,R]$, and hence its maximum
over the compact feasible interval $I_{i,j,w}$ is attained at an endpoint.

The endpoints of $I_{i,j,w}$ arise when one of the defining inequalities is
tight:
\[
a=0,\qquad a=R,\qquad
\Lambda_i^{\setminus w}+wa=1,\qquad
\Lambda_j^{\setminus w}+w(R-a)=1.
\]
These four cases are exactly
\[
\rho_i(w)=0,\qquad
\rho_j(w)=0,\qquad
\Lambda_i=1,\qquad
\Lambda_j=1,
\]
respectively.
\end{proof}

\begin{lemma}[Structure of terminal points]
Consider a terminal point of the pairwise size-wise endpoint process, and let
\[
\mathcal R := \{i\in V : \Lambda_i<1\}
\]
denote the set of residual bins. Then, for every size $w\in\mathcal W$, at
most one residual bin carries positive size-$w$ intensity:
\[
\bigl|\{i\in\mathcal R : \rho_i(w)>0\}\bigr| \le 1.
\]
Equivalently, for any two distinct residual bins $i,j\in\mathcal R$,
\[
\operatorname{supp}(\rho_i)\cap \operatorname{supp}(\rho_j)=\emptyset .
\]
\end{lemma}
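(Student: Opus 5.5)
The claim follows from the termination rule of the pairwise size-wise endpoint process, read together with the definition of residual bins. The statement contains nothing probabilistic, so this is pure bookkeeping.

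The key step is to unpack what "terminal point" means. By definition, the process stops exactly when no two residual bins carry positive rate at a common size. Bins that become full are removed and stay full, because the process never operates on them again and so never lowers their load. Hence at termination the residual set $\mathcal R=\{i:\Lambda_i<1\}$ is precisely the set of bins the process still treats as eligible.

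Now argue by contradiction. Suppose some size $w\in\mathcal W$ has two distinct bins $i,j\in\mathcal R$ with $\rho_i(w)>0$ and $\rho_j(w)>0$. Then $(i,j,w)$ is an admissible choice in step 1 of the process, so another operation is possible and the point is not terminal. This gives $|\{i\in\mathcal R:\rho_i(w)>0\}|\le 1$ for every $w$.

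The equivalence with disjoint supports is immediate from the finite-support convention. There, $\operatorname{supp}(\rho_i)=\{w\in\mathcal W:\rho_i(w)>0\}$, and two supports intersect exactly when some $w$ has positive rate in both bins.

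The only point needing care is the consistency between the removal rule and the set $\mathcal R$. One must check that a bin marked full has $\Lambda_i=1$ and that no later update changes it. One must also check that every bin with $\Lambda_i<1$ is still considered residual at termination; this holds because only bins reaching $\Lambda_i=1$ are ever removed. Both facts follow from step 4 together with the feasible interval in the endpoint-form corollary, which keeps every update within capacity, so that $\Lambda_i\le 1$ is maintained throughout.

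If one also wants the existence of a terminal point, I would add a potential argument, though the statement as posed only concerns terminal points. The corollary guarantees that each update either zeroes one of the two shared rates, strictly reducing the number of shared (bin, size) incidences among residual bins, or fills a bin, strictly reducing the number of residual bins. Since $|V|$ and $|\mathcal W|$ are finite, the process terminates. This termination step is the only place I expect any subtlety.
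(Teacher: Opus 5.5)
Your proof is correct and takes the same route as the paper's: both argue by contradiction that two residual bins sharing a size $w$ would make the pairwise update applicable, which contradicts terminality. Your extra checks (that $\mathcal R$ coincides with the still-eligible bins, and the equivalence with disjoint supports) are fine; your potential argument for the existence of a terminal point is also sound, since every endpoint update strictly shrinks the set of (residual bin, size) incidences, and the paper relies on existence later without proving it.
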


\begin{proof}
Suppose, toward a contradiction, that there exist two distinct residual bins
$i,j\in\mathcal R$ and a size $w\in\mathcal W$ such that
\[
\rho_i(w)>0
\qquad\text{and}\qquad
\rho_j(w)>0.
\]
Then the pairwise size-wise endpoint update is applicable to the pair
$(i,j)$ and size $w$. By the endpoint form of a pairwise update, the
size-$w$ intensity assigned to the pair can be redistributed, without
decreasing the objective, so that at least one of the following holds:
\[
\rho_i(w)=0,\qquad
\rho_j(w)=0,\qquad
\Lambda_i=1,\qquad
\Lambda_j=1.
\]
Thus the update either removes size $w$ from one of the two residual bins,
or makes one of the two bins full and hence removes it from the residual
set. In either case, it eliminates this shared occurrence of size $w$ among
residual bins, while not decreasing the total expected excess.

This contradicts terminality of the process, which stops only when no two
residual bins share positive intensity at any size. Therefore, for every
$w\in\mathcal W$, at most one residual bin can carry positive size-$w$
intensity.
\end{proof}

Consequently, without decreasing the objective, we may restrict attention to
terminal partitions in which the residual bins have disjoint size supports.

\subsection{Monotonicity under upward size replacements}

The following lemma formalizes the principle that, for a fixed expected
load, replacing smaller jumps by larger jumps can only increase expected
excess. Equivalently, making the load distribution more ``chunky'' increases
all convex cost functionals.

\begin{lemma}[Monotonicity under upward size replacements]\label{lem:monotonicity}
Let $\mu$ be an intensity measure on $(0,1]$, and let
\[
Y\sim \mathrm{CP}(\mu).
\]
Fix sizes $0<s<s'\le 1$ and a rate $a>0$ such that $\mu(\{s\})\ge a$.
Define
\[
\mu'
:=
\mu - a\delta_s + \frac{as}{s'}\delta_{s'}.
\]
That is, rate $a$ of size-$s$ items is replaced by rate $as/s'$ of
size-$s'$ items.

Then the expected load is unchanged:
\[
\int_0^1 x\,\mu'(dx)
=
\int_0^1 x\,\mu(dx).
\]
Moreover, if
\[
Y'\sim \mathrm{CP}(\mu'),
\]
then
\[
Y \preceq_{\mathrm{cx}} Y',
\]
that is, for every convex function $\phi:\mathbb R_{\ge 0}\to\mathbb R$ for
which the expectations are finite,
\[
\mathbb E[\phi(Y)]\le \mathbb E[\phi(Y')].
\]
In particular,
\[
\mathbb E[(Y-1)_+]\le \mathbb E[(Y'-1)_+].
\]
\end{lemma}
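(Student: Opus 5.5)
The plan is a coupling argument that isolates the replaced mass. By the superposition property of compound Poisson processes, write $Y = A + B$ and $Y' = A + B'$, where $A\sim\mathrm{CP}(\mu - a\delta_s)$ is the common part, $B = sN$ with $N\sim\mathrm{Poisson}(a)$, and $B' = s'N'$ with $N'\sim\mathrm{Poisson}(as/s')$. Here $A$ is independent of $B$ and of $B'$. The expected-load identity is immediate: the removed mass is $as$ and the added mass is $s'\cdot as/s' = as$.

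It therefore suffices to prove $B\preceq_{\mathrm{cx}} B'$. Convex order is preserved under convolution with an independent variable: conditioning on $A=y$, the map $z\mapsto\phi(y+z)$ is convex, so $\mathbb E[\phi(y+B)]\le\mathbb E[\phi(y+B')]$, and integrating over $y$ gives the claim. Finiteness of the expectations is handled by the hypothesis; for $\phi(x)=(x-1)_+$ everything is finite since all loads have finite mean.

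To show $B\preceq_{\mathrm{cx}}B'$, write $\theta := s/s' \in (0,1)$ and $\beta := as/s' = \theta a$. I will construct a martingale coupling, i.e.\ realize $B = \mathbb E[B'\mid B]$ up to a mean-preserving spread. By Poisson thinning, generate $N\sim\mathrm{Poisson}(a)$ and keep each point independently with probability $\theta$, giving $N'\sim\mathrm{Poisson}(\theta a)$. Then $\mathbb E[N'\mid N] = \theta N$, so
\[
\mathbb E[s'N'\mid N] = s'\theta N = sN = B.
\]
Thus $B = \mathbb E[B'\mid B]$, and conditional Jensen gives $\mathbb E[\phi(B)]\le \mathbb E[\phi(B')]$ for every convex $\phi$. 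This is the whole argument.

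I expect the main obstacle to be only mild bookkeeping: justifying the superposition decomposition when $\mu$ has infinite total mass near $0$ (the general measure case) rather than finite support. Under the paper's finite-support convention $\lambda$ is finite, so the decomposition is elementary. In general I would note that the $s$-atom part is a finite-rate Poisson piece independent of the rest by the independence of restrictions of a Poisson random measure to disjoint sets. Alternatively, the convexity via the Poisson difference identity from the endpoint lemma could serve as a backup, but the thinning coupling is cleaner. The final "in particular" statement follows by taking $\phi(x)=(x-1)_+$.
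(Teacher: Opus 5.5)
Your proof is correct, and it shares the paper's skeleton: split off the common load $\mathrm{CP}(\mu-a\delta_s)$, prove $sN_a\preceq_{\mathrm{cx}} s'N_{as/s'}$, and then use that convex order is preserved under adding an independent variable. The key step, however, is done differently.

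The paper obtains $sN_a\preceq_{\mathrm{cx}} s'N_{as/s'}$ by invoking, without proof or citation, a general criterion. That criterion says compound Poisson laws are convex-ordered whenever their intensity measures have equal first moments and ordered stop-loss transforms $\int (x-t)_+\,\nu(dx)\le\int (x-t)_+\,\nu'(dx)$. The paper then only checks the elementary inequality $a(s-t)_+\le \frac{as}{s'}(s'-t)_+$.

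You instead build an explicit martingale coupling. You thin each size-$s$ point into a size-$s'$ point with probability $s/s'$, so that $\mathbb E[s'N'\mid N]=sN$, and conclude by conditional Jensen.

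Your route is fully self-contained and elementary. It is in fact exactly what the paper's criterion reduces to for a single atom: after padding the intensities with mass at $0$ to equalize the Poisson rates, the claim law $\delta_s$ is dominated by its mean-preserving spread $\frac{s}{s'}\delta_{s'}+\bigl(1-\frac{s}{s'}\bigr)\delta_0$. The paper's criterion, in exchange, compares arbitrary (non-atomic) intensity measures in one stroke. That would be convenient for replacing a whole range of sizes at once, rather than iterating atom by atom as the subsequent corollary does.

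You could also skip the separate convolution step by coupling the whole load. With $Y=A+sN$ and $Y'=A+s'N'$ built by the same thinning, you have $\mathbb E[Y'\mid A,N]=Y$ directly.
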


\begin{proof}
The equality of expected loads is immediate:
\[
as=\frac{as}{s'}\,s'.
\]

It remains to prove the convex-order claim. Decompose the original and
modified compound Poisson loads as
\[
Y=B+X,
\qquad
Y'=B+X',
\]
where $B$ is the independent load generated by all intensity except the
modified atom, and
\[
X=sN_a,
\qquad
X'=s'N_{as/s'}.
\]
Here
\[
N_a\sim \mathrm{Poisson}(a),
\qquad
N_{as/s'}\sim \mathrm{Poisson}(as/s').
\]
Both $X$ and $X'$ have mean $as$.

We claim that
\[
X\preceq_{\mathrm{cx}} X'.
\]
We use the following standard convex-order criterion for compound Poisson
random variables: if two compound Poisson intensity measures $\nu$ and
$\nu'$ have the same first moment and satisfy
\[
\int_0^1 (x-t)_+\,\nu(dx)
\le
\int_0^1 (x-t)_+\,\nu'(dx)
\qquad \forall t\ge 0,
\]
then
\[
\mathrm{CP}(\nu)\preceq_{\mathrm{cx}}\mathrm{CP}(\nu').
\]

Apply this criterion to
\[
\nu=a\delta_s,
\qquad
\nu'=\frac{as}{s'}\delta_{s'}.
\]
The first moments agree:
\[
\int_0^1 x\,\nu(dx)=as=\int_0^1 x\,\nu'(dx).
\]
For every $t\ge 0$,
\[
\int_0^1 (x-t)_+\,\nu(dx)
=
a(s-t)_+,
\]
whereas
\[
\int_0^1 (x-t)_+\,\nu'(dx)
=
\frac{as}{s'}(s'-t)_+.
\]
We verify that
\[
a(s-t)_+
\le
\frac{as}{s'}(s'-t)_+
\qquad\forall t\ge 0.
\]
If $t\ge s'$, both sides are zero. If $s\le t<s'$, the left-hand side is
zero and the right-hand side is nonnegative. Finally, if $0\le t<s$, the
claim is equivalent to
\[
s-t \le s-\frac{s}{s'}t,
\]
or
\[
t\ge \frac{s}{s'}t,
\]
which holds since $s<s'$.

Therefore
\[
X\preceq_{\mathrm{cx}} X'.
\]
Convex order is preserved under addition of an independent random variable,
so
\[
B+X\preceq_{\mathrm{cx}}B+X'.
\]
Hence
\[
Y\preceq_{\mathrm{cx}}Y'.
\]
Taking $\phi(x)=(x-1)_+$ gives
\[
\mathbb E[(Y-1)_+]\le \mathbb E[(Y'-1)_+].
\]
\end{proof}

\begin{corollary}[Iterated upward replacements] \label{cor:monotonicity}
If $\mu'$ is obtained from $\mu$ by a finite sequence of replacements of the
form
\[
a\delta_s
\longmapsto
\frac{as}{s'}\delta_{s'}
\qquad\text{with } 0<s\le s'\le 1,
\]
then
\[
\mathrm{CP}(\mu)\preceq_{\mathrm{cx}}\mathrm{CP}(\mu').
\]
Consequently,
\[
\mathbb E[(Y-1)_+]\le \mathbb E[(Y'-1)_+],
\]
where
\[
Y\sim\mathrm{CP}(\mu),
\qquad
Y'\sim\mathrm{CP}(\mu').
\]
\end{corollary}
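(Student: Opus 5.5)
The plan is to argue by induction on the number of replacements, using \Cref{lem:monotonicity} as the one-step engine and transitivity of the convex order to chain the steps. Write $\mu=\mu_0,\mu_1,\dots,\mu_T=\mu'$ for the successive intensity measures, where $\mu_{t+1}$ is obtained from $\mu_t$ by a single replacement $a_t\delta_{s_t}\mapsto \frac{a_t s_t}{s'_t}\delta_{s'_t}$ with $0<s_t\le s'_t\le 1$, and let $Y_t\sim\mathrm{CP}(\mu_t)$.

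\textbf{One step.} Fix $t$ and show $Y_t\preceq_{\mathrm{cx}}Y_{t+1}$. There are two cases.
\begin{itemize}
\item If $s_t=s'_t$, the replacement is the identity, so $\mu_{t+1}=\mu_t$. The two laws coincide and the relation is trivial.
\item If $s_t<s'_t$, we are exactly in the setting of \Cref{lem:monotonicity}, provided the replacement is well defined. This requires $\mu_t(\{s_t\})\ge a_t$, which is implicit in calling it a replacement applied to $\mu_t$. The lemma then gives $Y_t\preceq_{\mathrm{cx}}Y_{t+1}$.
\end{itemize}

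\textbf{Finiteness.} The lemma also shows that the first moment $\int x\,\mu_t(dx)$ is preserved at every step. Hence all $Y_t$ have the same finite mean. Moreover, every convex $\phi$ of at most linear growth, such as $(x-1)_+$, has finite expectation along the entire chain.

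\textbf{Chaining.} Combine the steps using transitivity of $\preceq_{\mathrm{cx}}$. For any convex $\phi$ with finite expectations,
\[
\mathbb E[\phi(Y_0)]\le \mathbb E[\phi(Y_1)]\le\cdots\le \mathbb E[\phi(Y_T)].
\]
This yields $\mathrm{CP}(\mu)\preceq_{\mathrm{cx}}\mathrm{CP}(\mu')$.

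\textbf{Main obstacle.} The only delicate point is ensuring that every intermediate expectation is finite, so that the inequalities can be chained for a general convex $\phi$. To avoid this, I would state the chaining only for $\phi$ whose expectations are finite at every $Y_t$. For the application we actually need, this is automatic: $\phi(x)=(x-1)_+$ satisfies $0\le\phi(x)\le x$, and every $Y_t$ has the same finite mean.

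Specializing $\phi(x)=(x-1)_+$ then gives $\mathbb E[(Y-1)_+]\le\mathbb E[(Y'-1)_+]$, as claimed.
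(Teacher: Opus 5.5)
Your proposal is correct and follows the paper's proof: apply \Cref{lem:monotonicity} to each replacement and chain the steps by transitivity of convex order. Your separate handling of the trivial case $s=s'$ (which the lemma, stated for $s<s'$, does not cover) and of finiteness of the intermediate expectations fills in details the paper leaves implicit.
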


\begin{proof}
Apply the preceding lemma to each replacement in the sequence and use
transitivity of convex order.
\end{proof}

\subsection{Upper bounds on expected excess}

We first derive intensity-dependent expected excess upper bounds for a single bin. We then show that a balanced partition within the loads maximizes the global upper bound.

\begin{lemma}[Variance upper bound for expected excess]\label{lem:singlebinvar}
Let $\rho$ be an intensity measure on $(0,1]$, and let
\[
Y \sim \mathrm{CP}(\rho)
\]
be the corresponding compound Poisson load. Suppose that
\[
\mathbb E[Y]
=
\int_0^1 w\,\rho(dw)
\le 1.
\]
Then
\[
\mathcal E[\rho]
:=
\mathbb E[(Y-1)_+]
\le
\frac{1}{2}\sqrt{\operatorname{Var}(Y)}
=
\frac{1}{2}\sqrt{\int_0^1 w^2\,\rho(dw)}.
\]
\end{lemma}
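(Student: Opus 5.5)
We aim to prove $\mathbb E[(Y-1)_+] \le \tfrac12\sqrt{\operatorname{Var}(Y)}$ for any real random variable $Y$ with $\mathbb E[Y]\le 1$ and finite variance. The compound Poisson structure will enter only at the end, through the variance formula.

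\textbf{Step 1: reduction to the centered excess.} Write $m:=\mathbb E[Y]\le 1$ and $\sigma^2:=\operatorname{Var}(Y)$. Since $t\mapsto (t-1)_+$ is nondecreasing and $Y-1\le Y-m$, we get pointwise
\[
(Y-1)_+\le (Y-m)_+ .
\]
It therefore suffices to bound $\mathbb E[(Y-m)_+]$.

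\textbf{Step 2: the centered excess is half the mean absolute deviation.} Because $\mathbb E[Y-m]=0$, the positive and negative parts of $Y-m$ have equal expectation. Hence
\[
\mathbb E[(Y-m)_+]=\tfrac12\,\mathbb E|Y-m|.
\]
By Cauchy--Schwarz (or Jensen), $\mathbb E|Y-m|\le\sqrt{\mathbb E[(Y-m)^2]}=\sigma$. Combining with Step 1 gives $\mathbb E[(Y-1)_+]\le \tfrac12\sigma$.

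\textbf{Step 3: the variance of the compound Poisson load.} For $Y\sim\mathrm{CP}(\rho)$ we have $\operatorname{Var}(Y)=\int_0^1 w^2\,\rho(dw)$. In the finite-rate representation $Y=\sum_{k\le N}W_k$, the law of total variance gives
\[
\lambda\operatorname{Var}(W)+\lambda(\mathbb E W)^2=\lambda\,\mathbb E[W^2]=\int w^2\,\rho(dw).
\]
For general $\rho$, differentiating the cumulant generating function $\int (e^{\theta w}-1)\,\rho(dw)$ twice at $\theta=0$ gives the same formula. Finiteness of this integral is automatic: $w\le 1$ implies $w^2\le w$, so $\int w^2\,\rho(dw)\le \int w\,\rho(dw)\le 1$.

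The only step that needs care is Step 1. The comparison must go through the monotonicity of $(\cdot)_+$ and should not attempt a direct second-moment bound on $(Y-1)_+$; that direct route would give the weaker constant $1$ instead of $\tfrac12$. The rest is routine.
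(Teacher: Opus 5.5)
Your proposal is correct and follows the paper's proof essentially step for step: the pointwise comparison $(Y-1)_+\le (Y-m)_+$ using $m\le 1$, the identity $\mathbb E[(Y-m)_+]=\tfrac12\mathbb E|Y-m|$, Cauchy--Schwarz, and then the compound Poisson variance formula. Your added derivation of $\operatorname{Var}(Y)=\int_0^1 w^2\,\rho(dw)$ via the law of total variance, and the finiteness check $\int w^2\,\rho(dw)\le\int w\,\rho(dw)\le 1$, are correct details the paper simply asserts.
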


\begin{proof}
Let
\[
m:=\mathbb E[Y]\le 1.
\]
Since
\[
(Y-1)_+ \le (Y-m)_+,
\]
we have
\[
\mathcal E[\rho]
=
\mathbb E[(Y-1)_+]
\le
\mathbb E[(Y-m)_+].
\]

Because $\mathbb E[Y-m]=0$, the positive and negative parts have equal expectation:
\[
\mathbb E[(Y-m)_+]
=
\mathbb E[(m-Y)_+].
\]
Therefore,
\[
\mathbb E[(Y-m)_+]
=
\frac12 \mathbb E[|Y-m|].
\]
By Cauchy--Schwarz,
\[
\mathbb E[|Y-m|]
\le
\sqrt{\mathbb E[(Y-m)^2]}
=
\sqrt{\operatorname{Var}(Y)}.
\]
Hence
\[
\mathcal E[\rho]
\le
\frac12\sqrt{\operatorname{Var}(Y)}.
\]

Finally, for a compound Poisson random variable with intensity measure
$\rho$ on jump sizes,
\[
\operatorname{Var}(Y)
=
\int_0^1 w^2\,\rho(dw).
\]
Thus
\[
\mathcal E[\rho]
\le
\frac12
\sqrt{
\int_0^1 w^2\,\rho(dw)
}.
\]
\end{proof}

We now derive a specialized bound for the case where all mass is on sizes $1$ and $\tau<1$. 

\begin{lemma}[$\eta$-dependent excess bound for two sizes]\label{lem:singlebin2s}
Fix $0<\tau<1$ and $\eta\in[0,1]$. Consider a load-one compound Poisson bin
whose size-$1$ intensity is $\eta$ and whose size-$\tau$ intensity is
$(1-\eta)/\tau$. Equivalently,
\[
Y
:=
N_\eta+\tau N_{(1-\eta)/\tau},
\]
where
\[
N_\eta\sim \mathrm{Poisson}(\eta),
\qquad
N_{(1-\eta)/\tau}\sim \mathrm{Poisson}\!\left(\frac{1-\eta}{\tau}\right),
\]
independently. 
Then
\[
\mathbb E[Y]=1
\]
and
\[
\mathbb E[(Y-1)_+]
=
e^{-\eta}
\mathbb E\left[
\left(1-\tau N_{(1-\eta)/\tau}\right)_+
\right].
\]
Consequently,
\[
\mathbb E[(Y-1)_+]
\le
e^{-\eta}
\left(
\eta+\frac12\sqrt{\tau(1-\eta)}
\right).
\]
\end{lemma}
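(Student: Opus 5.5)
The plan has three steps: check the mean, condition on the heavy count $N_\eta$ to obtain the exact identity, and then bound the remaining light-part expectation.

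\textbf{Mean.} $\mathbb E[Y]=1$ is immediate: $\mathbb E[N_\eta]=\eta$ and $\tau\,\mathbb E[N_{(1-\eta)/\tau}]=1-\eta$.

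\textbf{Exact identity.} Write $L:=\tau N_{(1-\eta)/\tau}$, so $Y=N_\eta+L$ and $\mathbb E[L]=1-\eta$. The key is the identity $(y-1)_+=(y-1)+(1-y)_+$, which gives
\[
\mathbb E[(Y-1)_+]=\mathbb E[Y]-1+\mathbb E[(1-Y)_+]=\mathbb E[(1-Y)_+].
\]
Since $N_\eta$ is a nonnegative integer and $L\ge 0$, we have $1-Y\le 0$ whenever $N_\eta\ge 1$. Hence $(1-Y)_+$ is nonzero only on the event $\{N_\eta=0\}$, which has probability $e^{-\eta}$. By independence of $N_\eta$ and $L$,
\[
\mathbb E[(Y-1)_+]=e^{-\eta}\,\mathbb E[(1-L)_+].
\]
This is exactly the displayed identity in the statement.

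\textbf{Bounding the light part.} Apply the same trick to $L$:
\[
\mathbb E[(1-L)_+]=1-\mathbb E[L]+\mathbb E[(L-1)_+]=\eta+\mathbb E[(L-1)_+].
\]
The variable $L$ is compound Poisson with intensity $\frac{1-\eta}{\tau}\delta_\tau$, and its mean is $1-\eta\le 1$. So Lemma~\ref{lem:singlebinvar} applies and gives
\[
\mathbb E[(L-1)_+]\le \tfrac12\sqrt{\operatorname{Var}(L)}=\tfrac12\sqrt{\tau^2\cdot\tfrac{1-\eta}{\tau}}=\tfrac12\sqrt{\tau(1-\eta)}.
\]
Multiplying by $e^{-\eta}$ yields the stated bound $e^{-\eta}\bigl(\eta+\tfrac12\sqrt{\tau(1-\eta)}\bigr)$.

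\textbf{Possible obstacles.} There is no real obstacle. The only subtle step is noticing that the excess at mean one equals the deficit $\mathbb E[(1-Y)_+]$, and that this deficit can only occur when no unit-size item arrives. The endpoint cases need a word but no separate argument. When $\eta=1$, $L\equiv 0$ and the bound is $e^{-1}$, which matches. When $\eta=0$, the bound reduces to the variance bound $\tfrac12\sqrt{\tau}$.
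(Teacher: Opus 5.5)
Your proposal is correct and follows the paper's proof essentially step for step. The mean computation, the rewriting of the excess as the deficit $\mathbb E[(1-Y)_+]$, and the observation that the deficit lives only on $\{N_\eta=0\}$ all match the paper. The one cosmetic difference is the final bound: you write $\mathbb E[(1-L)_+]=\eta+\mathbb E[(L-1)_+]$ exactly and then invoke Lemma~\ref{lem:singlebinvar}, while the paper uses the pointwise inequality $(1-\tau M)_+\le \eta+((1-\eta)-\tau M)_+$ and the mean-zero Cauchy--Schwarz estimate inline. Both arrive at the same $\tfrac12\sqrt{\tau(1-\eta)}$.
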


\begin{proof}
First,
\[
\mathbb E[Y]
=
\eta+\tau\cdot \frac{1-\eta}{\tau}
=
1.
\]
Therefore,
\[
\mathbb E[(Y-1)_+]
=
\mathbb E[(1-Y)_+].
\]

Let
\[
M:=N_{(1-\eta)/\tau}.
\]
If $N_\eta\ge 1$, then
\[
Y=N_\eta+\tau M\ge 1,
\]
and hence
\[
(1-Y)_+=0.
\]
Thus the deficit can occur only on the event $\{N_\eta=0\}$. Since
$N_\eta$ and $M$ are independent,
\[
\mathbb E[(1-Y)_+]
=
\Pr[N_\eta=0]\,
\mathbb E[(1-\tau M)_+]
=
e^{-\eta}
\mathbb E[(1-\tau M)_+].
\]
This proves the exact identity.

It remains to bound the last expectation. Write
\[
1-\tau M
=
\eta+\bigl((1-\eta)-\tau M\bigr).
\]
Therefore,
\[
(1-\tau M)_+
\le
\eta+\bigl((1-\eta)-\tau M\bigr)_+.
\]
Now
\[
\mathbb E[\tau M]=1-\eta
\]
and
\[
\operatorname{Var}(\tau M)
=
\tau^2\operatorname{Var}(M)
=
\tau^2\cdot \frac{1-\eta}{\tau}
=
\tau(1-\eta).
\]
For any mean-zero random variable $X$,
\[
\mathbb E[X_+]
=
\frac12 \mathbb E[|X|]
\le
\frac12\sqrt{\mathbb E[X^2]}.
\]
Apply this to
\[
X:=(1-\eta)-\tau M,
\]
which has mean zero. Then
\[
\mathbb E\bigl[((1-\eta)-\tau M)_+\bigr]
\le
\frac12\sqrt{\tau(1-\eta)}.
\]
Hence
\[
\mathbb E[(1-\tau M)_+]
\le
\eta+\frac12\sqrt{\tau(1-\eta)}.
\]
Multiplying by $e^{-\eta}$ gives
\[
\mathbb E[(Y-1)_+]
\le
e^{-\eta}
\left(
\eta+\frac12\sqrt{\tau(1-\eta)}
\right).
\]
\end{proof}

\begin{corollary}
    For a single bin and the two size setting, the expected excess is bounded by
\[
\mathbb E[(Y-1)_+]
\le
\min\left\{
\frac12\sqrt{\eta+\tau(1-\eta)},
\;
e^{-\eta}
\left(
\eta+\frac12\sqrt{\tau(1-\eta)}
\right)
\right\}.
\]
\end{corollary}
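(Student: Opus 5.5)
The corollary bounds the excess by the minimum of two terms, so it suffices to prove each term separately and take the smaller.

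\emph{Second term.} This is exactly the conclusion of Lemma~\ref{lem:singlebin2s}, applied to the bin $Y=N_\eta+\tau N_{(1-\eta)/\tau}$. Nothing further is needed.

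\emph{First term.} We apply the variance bound of Lemma~\ref{lem:singlebinvar}. The bin's intensity measure is
\[
\rho=\eta\,\delta_1+\frac{1-\eta}{\tau}\,\delta_\tau ,
\]
and its mean load is $\eta+\tau\cdot\frac{1-\eta}{\tau}=1\le 1$, so the lemma's hypothesis holds. The second moment of the intensity is
\[
\int_0^1 w^2\,\rho(dw)=\eta\cdot 1^2+\tau^2\cdot\frac{1-\eta}{\tau}=\eta+\tau(1-\eta).
\]
The lemma then gives
\[
\mathbb E[(Y-1)_+]\le \tfrac12\sqrt{\eta+\tau(1-\eta)}.
\]
One can also check this directly: the two Poisson components are independent, so the variance is $\operatorname{Var}(N_\eta)+\tau^2\operatorname{Var}(N_{(1-\eta)/\tau})=\eta+\tau(1-\eta)$.

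Since both bounds hold, $\mathbb E[(Y-1)_+]$ is at most their minimum, which is the claim. There is no real obstacle. The only care needed is checking that the intensity is supported on $(0,1]$ (true, since $0<\tau<1$) and that the mean-load condition $\mathbb E[Y]\le 1$ holds, which it does with equality.
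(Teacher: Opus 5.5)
Your proposal is correct and follows the paper's proof, which simply combines the variance bound (Lemma~\ref{lem:singlebinvar}) with the two-size bound (Lemma~\ref{lem:singlebin2s}). Your computation of the second moment $\eta+\tau(1-\eta)$ and your check that $\mathbb{E}[Y]=1$ fill in details the paper leaves implicit.
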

\begin{proof}
    Combining \cref{lem:singlebinvar} and \cref{lem:singlebin2s}.
\end{proof}

\begin{lemma}[Balanced allocation maximizes the two-size upper bound]\label{lem:balancedbound}
Fix $0<\tau<1$ and let there be $m$ full bins, each of expected load $1$.
Assume that the only sizes are $1$ and $\tau$, and that the total load
contributed by size-$1$ items across these $m$ bins is $m\eta$, where
$\eta\in[0,1]$. Equivalently, if $\eta_i$ denotes the size-$1$ load in bin
$i$, then
\[
0\le \eta_i\le 1,
\qquad
\frac1m\sum_{i=1}^m \eta_i=\eta,
\]
and the size-$\tau$ load in bin $i$ is $1-\eta_i$.

For a full bin with size-$1$ load $\eta_i$, let $Y_i$ denote its compound
Poisson load. Then
\[
\mathbb E[(Y_i-1)_+]
\le
U_\tau(\eta_i),
\]
where
\[
U_\tau(x)
:=
\min\left\{
\frac12\sqrt{x+\tau(1-x)},
\;
e^{-x}\left(x+\frac12\sqrt{\tau(1-x)}\right)
\right\}.
\]
Moreover,
\[
\sum_{i=1}^m \mathbb E[(Y_i-1)_+]
\le
m\,U_\tau(\eta).
\]
Thus, for this upper bound, the worst case is attained by balancing the
size-$1$ load across the full bins:
\[
\eta_i=\eta
\qquad\forall i.
\]
\end{lemma}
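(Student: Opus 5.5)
The lemma has two parts. The first is the per-bin bound $\mathbb E[(Y_i-1)_+]\le U_\tau(\eta_i)$. The second is the aggregation $\sum_i U_\tau(\eta_i)\le m\,U_\tau(\eta)$. The per-bin bound is immediate from the preceding corollary. A full bin with size-$1$ load $\eta_i$ and size-$\tau$ load $1-\eta_i$ has size-$1$ intensity $\eta_i$ and size-$\tau$ intensity $(1-\eta_i)/\tau$, which is exactly the setting of \cref{lem:singlebin2s}. For the variance branch, \cref{lem:singlebinvar} gives
\[
\tfrac12\sqrt{\eta_i\cdot 1^2+\tfrac{1-\eta_i}{\tau}\tau^2}=\tfrac12\sqrt{\eta_i+\tau(1-\eta_i)}.
\]
Taking the minimum of the two bounds yields $U_\tau(\eta_i)$.

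For the aggregation I will prove that $U_\tau$ is concave on $[0,1]$; Jensen then gives $\frac1m\sum_i U_\tau(\eta_i)\le U_\tau(\frac1m\sum_i\eta_i)=U_\tau(\eta)$. A pointwise minimum of concave functions is concave, so it suffices to check each branch separately.

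The first branch, $A(x)=\frac12\sqrt{\tau+(1-\tau)x}$, is the square root of an affine function with positive slope, so it is concave.

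The second branch is $B(x)=e^{-x}x+\frac12 e^{-x}\sqrt{\tau(1-x)}$. Its first term $xe^{-x}$ has second derivative $(x-2)e^{-x}\le0$ on $[0,1]$, so it is concave there. For the second term, set $h(x)=e^{-x}\sqrt{1-x}$. Writing $h=e^{g}$ with $g=-x+\frac12\log(1-x)$, we have $h''=h\,(g''+g'^2)$, where
\[
g'=-1-\tfrac{1}{2(1-x)},\qquad g''=-\tfrac{1}{2(1-x)^2}.
\]
Putting $y=\frac{1}{1-x}\ge1$ gives $g''+g'^2=(1+y/2)^2-y^2/2=1+y-y^2/4$. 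This quantity is not nonpositive for $y$ near $1$ (at $y=1$ it equals $7/4$), so $h$ is convex near $x=0$. Concavity of $B$ therefore has to come from the $xe^{-x}$ term dominating. Using $B''=(x-2)e^{-x}+\frac{\sqrt\tau}{2}h''$, I would check that $(x-2)+\frac{\sqrt\tau}{2}\sqrt{1-x}\,(1+y-y^2/4)\le0$ on the region where $h''>0$. That region is $y<2+2\sqrt2$, i.e.\ $x$ bounded away from $1$. Since $\tau<1$, the second term is at most $\frac12\sqrt{1-x}\,(1+y-y^2/4)$, which is bounded by a small constant (under $2$ at $x=0$: $\frac12\cdot\frac74$), while $x-2\le-1$ on $[0,1]$. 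This should close the argument with a one-variable estimate.

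I expect this second-branch concavity check to be the main obstacle; the rest is bookkeeping.

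If the direct check of $B$ fails somewhere, I will fall back to a weaker route: bound the sum by $m\,\widehat U(\eta)$, where $\widehat U$ is the concave envelope of $U_\tau$. Alternatively, I will argue concavity only of $\min\{A,B\}$ on the subinterval where $B$ is the active branch. Near $x=0$, where $B$ might fail to be concave, the bound is $A(0)=\frac12\sqrt\tau$ against $B(0)=\frac12\sqrt\tau$. Near there, comparing derivatives ($A'(0)=\frac{1-\tau}{4\sqrt\tau}$ versus $B'(0)=1-\frac34\sqrt\tau$) will determine which branch is active, and I would handle that case explicitly.
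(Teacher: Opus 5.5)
Your proposal is correct and follows the paper's proof. The per-bin bound combines \cref{lem:singlebinvar} and \cref{lem:singlebin2s}. The paper applies Jensen to each branch separately and then takes the minimum, which is equivalent to your ``minimum of concave functions is concave'' step, and it checks concavity of $e^{-x}(x+c\sqrt{1-x})$ by the same second-derivative computation, written in $u=\sqrt{1-x}$ instead of $y=1/(1-x)$. Your pending one-variable estimate closes at once: $1+y-y^2/4=2-(y-2)^2/4\le 2$ and $\sqrt{1-x}\le 1$ bound the positive part by $\sqrt{\tau}<1\le 2-x$, so the fallback routes are unnecessary.
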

\begin{proof}
For bin $i$, the load has the form
\[
Y_i=N_{\eta_i}+\tau N_{(1-\eta_i)/\tau},
\]
where the two Poisson variables are independent. Indeed,
\[
\mathbb E[Y_i]
=
\eta_i+\tau\cdot \frac{1-\eta_i}{\tau}
=
1.
\]

From the variance bound,
\[
\mathbb E[(Y_i-1)_+]
\le
\frac12\sqrt{\operatorname{Var}(Y_i)}.
\]
Here
\[
\operatorname{Var}(Y_i)
=
\eta_i+\tau(1-\eta_i),
\]
so
\[
\mathbb E[(Y_i-1)_+]
\le
B_1(\eta_i),
\qquad
B_1(x):=\frac12\sqrt{x+\tau(1-x)}.
\]

The two-size deficit bound gives
\[
\mathbb E[(Y_i-1)_+]
\le
B_2(\eta_i),
\]
where
\[
B_2(x)
:=
e^{-x}\left(x+\frac12\sqrt{\tau(1-x)}\right).
\]
Therefore,
\[
\mathbb E[(Y_i-1)_+]
\le
\min\{B_1(\eta_i),B_2(\eta_i)\}
=
U_\tau(\eta_i).
\]

We now sum these bounds. Since
\[
\min\{B_1(\eta_i),B_2(\eta_i)\}\le B_1(\eta_i)
\]
for every $i$, we have
\[
\sum_{i=1}^m U_\tau(\eta_i)
\le
\sum_{i=1}^m B_1(\eta_i).
\]
The function $B_1$ is concave on $[0,1]$ (see \cref{claim:concavity}), so Jensen's inequality gives
\[
\sum_{i=1}^m B_1(\eta_i)
\le
m B_1\!\left(\frac1m\sum_{i=1}^m \eta_i\right)
=
mB_1(\eta).
\]
Similarly,
\[
\sum_{i=1}^m U_\tau(\eta_i)
\le
\sum_{i=1}^m B_2(\eta_i).
\]
The function $B_2$ is concave on $[0,1]$ (see \cref{claim:concavity}). Hence Jensen's inequality gives
\[
\sum_{i=1}^m B_2(\eta_i)
\le
m B_2(\eta).
\]
Combining the two estimates,
\[
\sum_{i=1}^m U_\tau(\eta_i)
\le
\min\{mB_1(\eta),mB_2(\eta)\}
=
mU_\tau(\eta).
\]
Thus
\[
\sum_{i=1}^m \mathbb E[(Y_i-1)_+]
\le
mU_\tau(\eta).
\]
This proves the claim.
\end{proof}

\begin{claim} [Concavity of Upper bound functions] \label{claim:concavity}
The functions
\[
B_1(x)=\frac12\sqrt{x+\tau(1-x)}
\]
and
\[
B_2(x)=e^{-x}\left(x+\frac12\sqrt{\tau(1-x)}\right)
\]
are concave on $[0,1]$.
\end{claim}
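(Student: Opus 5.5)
The two claims are independent, and I would handle them separately: $B_1$ by a composition argument and $B_2$ by a direct second-derivative computation on $[0,1)$, extended to $x=1$ by continuity.

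\textbf{Concavity of $B_1$.} Write $B_1(x)=\tfrac12\sqrt{\tau+(1-\tau)x}$. The argument $\tau+(1-\tau)x$ is affine in $x$ and lies in $[\tau,1]\subset(0,\infty)$ on $[0,1]$. The map $y\mapsto\sqrt{y}$ is concave on $(0,\infty)$, and a concave function composed with an affine map is concave. No computation is needed.

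\textbf{Concavity of $B_2$.} Set $h(x):=x+\tfrac12\sqrt{\tau}\,(1-x)^{1/2}$, so $B_2=e^{-x}h$. On $[0,1)$ the function is smooth, and
\[
B_2''(x)=e^{-x}\bigl(h''(x)-2h'(x)+h(x)\bigr).
\]
It therefore suffices to show $g:=h''-2h'+h\le 0$ on $[0,1)$. To simplify, substitute $s:=\sqrt{1-x}\in(0,1]$ and $c:=\sqrt\tau\in(0,1)$. A routine differentiation gives
\[
g=-(1+s^2)+c\,q(s),\qquad q(s):=-\frac{1}{8s^3}+\frac{1}{2s}+\frac{s}{2},
\]
where the $-(1+s^2)$ term comes from $x-2$. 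Since $0<c<1$, the bound $g\le 0$ follows from
\[
\max\{q(s),0\}\le 1+s^2 \qquad\text{for all } s\in(0,1].
\]
I would split $q$ into two parts and bound each.
\begin{itemize}
\item For the singular part $\tfrac{1}{2s}-\tfrac{1}{8s^3}$: a one-variable maximization puts the critical point at $s^2=3/4$, where the value is $\tfrac{1}{3\sqrt3}<0.4$.
\item For the remaining part: $\tfrac{s}{2}-1-s^2\le \tfrac1{16}-1<-0.9$.
\end{itemize}
Adding the two gives $q(s)-(1+s^2)<0$. When $q(s)<0$ the required inequality is immediate, so $g<0$ on all of $[0,1)$.

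\textbf{Endpoint and main obstacle.} Finally, $B_2$ is continuous on $[0,1]$ and concave on $[0,1)$, so it is concave on $[0,1]$. At $x=1$, $B_2'$ tends to $-\infty$, which is consistent with concavity. The only real work is the sign of $g$, specifically controlling the singular term $-\tfrac{c}{8s^3}$ against the positive $\tfrac{c}{2s}$. Splitting $q$ as above, and using the crude bound $c<1$, is what I expect to make this step go through cleanly.
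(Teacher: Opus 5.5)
Your proof is correct and follows essentially the same route as the paper: $B_1$ is handled as the square root of an affine function, and for $B_2$ both arguments substitute $\sqrt{1-x}$, compute the second derivative explicitly, remove $\tau$ via $\sqrt{\tau}\le 1$, and reduce to a one-variable inequality. The only difference is how that last inequality is checked: the paper clears denominators to get $8u^5-4u^4+8u^3-4u^2+1\ge 0$ and splits at $u=1/2$, while you bound the singular and polynomial parts separately. One small slip: the maximum of $\frac{1}{2s}-\frac{1}{8s^3}$ at $s^2=3/4$ is $\frac{2}{3\sqrt{3}}\approx 0.385$, not $\frac{1}{3\sqrt{3}}$, but this is still below $0.4$, so your argument goes through unchanged.
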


\begin{proof}
The concavity of $B_1$ is immediate because it is the square root of an
affine function.

For $B_2$, write
\[
q:=\sqrt{\tau},
\qquad
c:=\frac q2\le \frac12,
\]
so
\[
B_2(x)=e^{-x}\bigl(x+c\sqrt{1-x}\bigr).
\]
For $x\in[0,1)$, set
\[
u:=\sqrt{1-x}.
\]
A direct calculation gives
\[
B_2''(x)
=
\frac{e^{-x}}{4(1-x)^2}
\left[
c(4u^5+4u^3-u)-4u^4(1+u^2)
\right].
\]
We show that the bracketed term is nonpositive. If
\[
4u^5+4u^3-u\le 0,
\]
this is immediate. Otherwise, using $c\le 1/2$, it is enough to show
\[
\frac12(4u^5+4u^3-u)
\le
4u^4(1+u^2).
\]
Equivalently,
\[
8u^5-4u^4+8u^3-4u^2+1\ge 0.
\]
If $u\ge 1/2$, this is immediate since
\[
8u^5-4u^4+8u^3-4u^2
=
4u^4(2u-1)+4u^2(2u-1)\ge 0.
\]
If $0\le u\le 1/2$, then
\[
8u^5-4u^4+8u^3-4u^2+1
=
1-4u^2(1+u^2)(1-2u).
\]
The function $u^2(1-2u)$ is maximized on $[0,1/2]$ at $u=1/3$, with value
$1/27$, and $1+u^2\le 5/4$ on this interval. Therefore
\[
4u^2(1+u^2)(1-2u)
\le
4\cdot \frac54 \cdot \frac1{27}
=
\frac5{27}<1.
\]
Hence the expression is nonnegative in all cases. Thus $B_2''(x)\le 0$ on
$[0,1)$, and $B_2$ is concave on $[0,1]$.
\end{proof}

\subsection{Proof of the approximation theorem}

We can now complete the proof of the global approximation guarantee.

\begin{proof}[Proof of Theorem~\ref{thm:approx-ratio}]
Fix a feasible fractional solution $x$, and let
\[
Z:=Z(x)=Z_H+Z_L,
\]
where $Z_H$ and $Z_L$ denote the heavy and light components of the objective,
respectively. If $Z=0$, the claim is trivial, so assume $Z>0$.

Recall from \eqref{eq:excess-gap} that the expected size of
the fractional matching constructed on the sparsifier satisfies
\[
\mathbb E[\|f\|_1]
=
Z
-
\sum_{i\in V}\mathbb E[(Y_i-1)_+],
\]
where $Y_i$ is the total sampled load arriving at resource $i$. Thus it
suffices to upper bound the total expected excess.

We first transform the load profile in a way that can only increase the
expected excess. Let
\[
\tau:=\frac1k.
\]
By the monotonicity lemma and its iterated form
(\Cref{lem:monotonicity,cor:monotonicity}), we may replace every item size
in $(\tau,1]$ by size $1$, preserving its expected load, and every item size
in $(0,\tau]$ by size $\tau$, again preserving its expected load. This
produces a two-size load profile supported on $\{1,\tau\}$, with total
size-$1$ load $Z_H$ and total size-$\tau$ load $Z_L$. Since this
transformation only increases the compound Poisson loads in convex order,
and since $y\mapsto (y-1)_+$ is convex, it can only increase the total
expected excess.

We now apply the pairwise size-wise endpoint process to this two-size
instance. By the terminal-point normal form, after removing full bins, no
size is shared by two residual bins. Since there are only two sizes, at
termination there are at most two residual bins: one carrying only size-$1$
load and one carrying only size-$\tau$ load.

If the combined expected load of these residual bins is at most $1$, we
merge them into a single bin. This cannot decrease the expected excess:
for independent nonnegative loads $X$ and $Y$,
\[
(X+Y-1)_+
\ge
(X-1)_+ + (Y-1)_+ .
\]
Thus, after this possible merge, the total residual load is either contained
in one bin of load at most $1$, or it is contained in two residual bins whose
combined load exceeds $1$. In either case, the total unused capacity among
the residual bins is at most $1$.

We next fill the remaining residual capacity by adding extra independent
load, using the same heavy-light proportions as the original total load.
Specifically, set
\[
\eta:=\frac{Z_H}{Z}.
\]
Whenever we add an amount $c$ of expected load, we add $\eta c$ as size-$1$
load and $(1-\eta)c$ as size-$\tau$ load. Adding independent nonnegative
load can only increase $(Y_i-1)_+$ pointwise, and hence can only increase
the total expected excess.

After this augmentation, all active bins are full. Let $m$ be the number of
full bins. Since the augmentation adds total load at most $1$, we have
\[
Z\le m\le Z+1.
\]
Moreover, by construction, the total size-$1$ load across the $m$ full bins
is
\[
Z_H+\eta(m-Z)=\eta m,
\]
and the total size-$\tau$ load is
\[
Z_L+(1-\eta)(m-Z)=(1-\eta)m.
\]
Thus the average size-$1$ load per full bin is exactly $\eta$.

By \Cref{lem:balancedbound}, the total expected excess over these $m$ full
bins is at most
\[
m\,U(\eta,\tau),
\]
where
\[
U(\eta,\tau)
:=
\min\left\{
\frac12\sqrt{\eta+\tau(1-\eta)},
\;
e^{-\eta}
\left(
\eta+\frac12\sqrt{\tau(1-\eta)}
\right)
\right\}.
\]
Using $m\le Z+1$, we obtain
\[
\sum_{i\in V}\mathbb E[(Y_i-1)_+]
\le
(Z+1)U(\eta,\tau).
\]
Since $U(\eta,\tau)\le 1$, this implies the slightly cleaner bound
\[
\sum_{i\in V}\mathbb E[(Y_i-1)_+]
\le
ZU(\eta,\tau)+1.
\]
Therefore,
\[
\mathbb E[\|f\|_1]
\ge
Z-ZU(\eta,\tau)-1
=
Z\bigl(1-U(\eta,\tau)\bigr)-1.
\]

Finally, substituting
\[
\eta=\frac{Z_H}{Z},
\qquad
1-\eta=\frac{Z_L}{Z},
\qquad
\tau=\frac1k,
\]
gives
\[
\mathbb E[\|f\|_1]
\ge
Z\left(
1-
\min\left\{
\frac12\sqrt{\frac{Z_H}{Z}+\frac1k\frac{Z_L}{Z}},
\;
e^{-Z_H/Z}
\left(
\frac{Z_H}{Z}
+
\frac12\sqrt{\frac1k\frac{Z_L}{Z}}
\right)
\right\}
\right)
-1.
\]
Since the maximum matching on the sparsifier has size at least the value of
any feasible fractional matching,
\[
\mathbb E[|M(G_S)|]\ge \mathbb E[\|f\|_1],
\]
which proves the theorem.
\end{proof}

\section{Implementation and Pseudocodes}

\subsection{Graph Construction} \label{sec:graph_construction}

The pseudocode for the graph construction is given in \Cref{alg:graph_construction}. 
We specifically construct balanced graphs where supply equals demand (equal number of requests and resources). We target this balanced regime because it is the point of matching market efficiency. Structural imbalances artificially simplify the allocation task; when one side of the market is vastly over-provisioned with choices, the performance of even naive algorithms degenerates to the same high baseline. Operating precisely where supply equals demand provides the necessary resolution to differentiate  sparsification strategies.

\begin{algorithm2e}[H]
\caption{Graph Construction at Time $t$ for NYC Yellow Taxis Dataset}
\label{alg:graph_construction}
\KwIn{Time interval $t$, Trip dataset $\mathcal{D}$, Set of all spatial zones $\mathcal{Z}$, Spatial adjacency matrix $A$}
\KwOut{Realized bipartite graph $G_t = (V_t \cup U_t, E_t)$}
\BlankLine

\tcp{1. Time Windows \& Data Aggregation}
$\mathcal{D}_{\text{drop}} \gets$ All drop-off events from $\mathcal{D}$ in window $[t - 5\text{m}, t)$\;
$\mathcal{D}_{\text{pick}} \gets$ All pick-up events from $\mathcal{D}$ in window $[t, t + 5\text{m})$\;
$n \gets |\mathcal{D}_{\text{drop}}|$ \tcp*{Define total graph size based on offline supply}

\For{each zone $z \in \mathcal{Z}$}{
    $p_{\text{car}}(z) \gets \frac{\text{count}(z \text{ in } \mathcal{D}_{\text{drop}})}{n}$\;
    $p_{\text{rider}}(z) \gets \frac{\text{count}(z \text{ in } \mathcal{D}_{\text{pick}})}{|\mathcal{D}_{\text{pick}}|}\;$
}
\BlankLine

\tcp{2. Proportional Node Sampling}
$V_t \gets \emptyset$ (Offline Cars), $U_t \gets \emptyset$ (Online Riders)\;
\For{$i = 1, \dots, n$}{
    Sample zone $z_v \sim p_{\text{car}}$ (with replacement)\;
    Create car node $v_i$ with location $z_v$, add to $V_t$\;
    Sample zone $z_u \sim p_{\text{rider}}$ (with replacement)\;
    Create rider node $u_i$ with location $z_u$, add to $U_t$\;
}
\BlankLine

\tcp{3. Edge Creation via Spatial Adjacency}
$E_t \gets \emptyset$\;
\For{each $v \in V_t$ and $u \in U_t$}{
    \If{$v.\text{zone} == u.\text{zone}$ \textbf{or} $A[v.\text{zone}, u.\text{zone}] == 1$}{
        $E_t \gets E_t \cup \{(v, u)\}$\;
    }
}
\BlankLine

\Return $G_t = (V_t \cup U_t, E_t)$\;
\end{algorithm2e}

\subsection{Monte Carlo construction of spread selection weights.}\label{sec:MonteCarlo}
Our theoretical guarantee depends on the geometry of the fractional solution used to guide local sparsification: solutions whose mass is spread across many light edges incur smaller collision penalties. In the experiments, we therefore use a simple Monte Carlo heuristic, described in \Cref{alg:offline_mc}, to construct spread fractional weights. We repeatedly sample a full realization of the stochastic instance, compute an offline maximum matching, and average the resulting matched-edge incidences. Because the matching algorithm is deterministic, we randomly shuffle the ordering of vertices before each run; this acts as random tie-breaking among multiple optimal matchings and helps spread mass across interchangeable or nearly interchangeable resources.

Formally, let $G$ denote a random realization of the full graph, and let $M(G)$ be an offline maximum matching selected using this randomized tie-breaking rule. Let $A_j(G)$ be the number of arrivals of type $j$ in $G$, and let $C_{ij}(G)$ be the number of type-$j$ arrivals matched to resource $v_i$ in $M(G)$. The expected edge incidence is
\[
y_{ij} := \mathbb E[C_{ij}(G)].
\]
Equivalently, since $\mathbb E[A_j(G)]=np_j$, the corresponding conditional LP variable is
\[
x_{ij} := \frac{y_{ij}}{np_j} = \Pr[\text{a type-}j\text{ arrival is matched to }v_i],
\]
where the probability is over the stochastic realization and the randomized tie-breaking. The vector $x$ is feasible for the expected-instance LP: every realized matching uses each resource at most once and matches each arrival at most once, so the resource and type constraints hold after taking expectations. Moreover,
\[
\sum_{i,j} np_j x_{ij} = \sum_{i,j} y_{ij} = \mathbb E[|M(G)|].
\]
Thus the Monte Carlo estimates approximate a feasible fractional solution of value $Z=\mathbb E[|M(G)|]$, and these estimates are used as the local
selection weights.

\begin{algorithm2e}[H]
\LinesNumbered
\DontPrintSemicolon
\caption{Monte Carlo Weight Estimation with Random Tie-Breaking}
\label{alg:offline_mc}
\KwIn{Rider distribution $D$, supply $V$, simulation count $M$, graph size $n$}
\KwOut{Conditional selection weights $x_{ij}$ for type-resource edges $(u_j,v_i)$}
\BlankLine

Initialize counters $C_{ij} \gets 0$ for all valid type-resource pairs $(u_j,v_i)$\;
Initialize counters $A_j \gets 0$ for all request types $t_j$ \tcp*{number of arrivals of type $j$ across simulations}
\For{$m = 1, \dots, M$}{
    Generate mock arrivals $U^{(m)}$ by sampling $n$ riders from $D$\;
    Construct full bipartite graph $G^{(m)} = (U^{(m)} \cup V, E^{(m)})$\;
    \BlankLine
    \tcp{Randomize vertex order to implement tie-breaking and promote spread}
    Randomly shuffle the order of vertices in $U^{(m)}$ and $V$\;
    $M^{(m)} \gets \mathrm{HopcroftKarp}(G^{(m)})$\;
    \BlankLine
    \For{each arriving rider $u$ of type $t_j$ in $U^{(m)}$}{
        $A_j \gets A_j+1$\;
        \If{rider is matched (there is $v_i\in \Gamma(t_j)$ such that $(u,v_i) \in M^{(m)}$)}{$C_{ij} \gets C_{ij}+1$\;}
    }
}
\BlankLine
\For{each request type $t_j$}{
    \For{each $v_i\in \Gamma(t_j)$}{
    $x_{ij} \gets (\sum_{i'} C_{i'j}) > 0 \ ? \ C_{ij} / A_j : 1/|\Gamma(t_j)|$
\tcp*{Implementation-only uniform fallback for unseen or seen and unmatched types}
    }
}
\Return $\{x_{ij}\}$\;
\end{algorithm2e}

\begin{algorithm2e}[H]
\LinesNumbered
\DontPrintSemicolon
\caption{\varopt~ Local Sparsifier Construction (Online)}
\label{alg:online_sparsifier}
\KwIn{Arriving rider $u$ of type $t_j$, Budget $k$, Pre-computed weights $\{x_{ij}\}$}
\KwOut{Selected edges $S_u \subseteq \{(u,v)\mid v\in \Gamma(t_j)\}$}
\BlankLine

Identify all valid candidate edges $E_u = \{(u, v_i) : v_i \in \Gamma(t_j)\}$\;
\For{each $e=(u, v_i) \in E_u$}{
    $\tilde{x}_e \gets x_{ij}$ 
}

\BlankLine
\tcp{\varopt$_k$ weighted sampling by $\tilde{x}_e$}
$S_u \gets \varopt_k((e,\tilde{x}_e)_{e\in E_u})$ \tcp*{Apply the stream-based algorithm \citep{varopt2011}}

\BlankLine
\tcp{The threshold $\tau$ such that $\sum_{e \in E_u} \min(1, \tau \cdot \tilde{x}_{e}) = \min(k, |E_u|)$ and exact inclusion probabilities $\pi_{e} = \min(1, \tau \cdot \tilde{x}_{e})$ are realized by this procedure.}

\Return $S_u$\;
\end{algorithm2e}

\end{document}